\documentclass[11pt]{article}
\usepackage{amssymb,amsfonts,amsmath,amsthm}%
\usepackage{graphicx}%
\usepackage{color}%
%
%

\usepackage[font=small,format=plain,labelfont=bf,up]{caption}

\usepackage{hyperref}%
\definecolor{gray}{rgb}{0.1,0.1,.1}
\hypersetup{
    colorlinks=true,       
    linkcolor=gray,          
    citecolor=gray      
}
%
%
%
%
\newcommand{\figdraft}{false}%
\newcommand{\figfile}[1]{#1}%
%
%

%
%
%
\theoremstyle{plain}%
\newtheorem{theorem}{Theorem}[]%
\newtheorem{corollary}[theorem]{Corollary}%
\newtheorem{lemma}[theorem]{Lemma}%
\newtheorem{remark}[theorem]{Remark}%
\theoremstyle{remark}
%
%
%
%
%
%

\newcommand{\nehari}{\fspace{M}}


\definecolor{colorGreen}{rgb}{0.,0.67,0}
\definecolor{colorRed}{rgb}{0.67,0.,0}
\definecolor{colorBlue}{rgb}{0.,0.,0.67}
%

%
%

\DeclareMathOperator{\argmax}{\mathrm{argmax}}

\DeclareMathOperator{\sgn}{\mathrm{sgn}}

\newcommand{\iu}{\mathtt{i}}
\newcommand{\mhexp}[1]{{{\mathtt{e}}^{#1}}}

\newcommand{\fspace}[1]{{\mathsf{#1}}}
\newcommand{\fspaceL}{\fspace{L}}

\newcommand{\fspaceW}{\fspace{W}}

\newcommand{\phase}{{\varphi}}

\newcommand{\Rset}{{\mathbb{R}}}
\newcommand{\Zset}{{\mathbb{Z}}}

\newcommand{\Nset}{{\mathbb{N}}}
\newcommand{\Qset}{{\mathbb{Q}}}

\newcommand{\ocinterval}[2]{(#1,\,#2]}%
\newcommand{\cointerval}[2]{[#1,\,#2)}%
\newcommand{\oointerval}[2]{(#1,\,#2)}%
\newcommand{\ccinterval}[2]{[#1,\,#2]}%


%


\newlength{\mhpicDwidth}
\newlength{\mhpicDvsep}
\newlength{\mhpicDhsep}

\newlength{\mhpicPwidth}
\newlength{\mhpicPvsep}
\newlength{\mhpicPhsep}

\newlength{\mhpicWhsep}

\setlength{\mhpicDhsep}{-2mm}%
\setlength{\mhpicDvsep}{0cm}%
\setlength{\mhpicWhsep}{+1mm}%
\setlength{\mhpicDwidth}{0.34\textwidth}%

\setlength{\mhpicPhsep}{-2mm}%
\setlength{\mhpicPvsep}{0cm}%
\setlength{\mhpicPwidth}{0.34\textwidth}%


\newcommand{\skp}[2]{{\left\langle{#1},\,{#2}\right\rangle}}

\newcommand{\bskp}[2]{{\big\langle{#1},\,{#2}\big\rangle}}

\newcommand{\at}[1]{{\left({#1}\right)}}

\newcommand{\bat}[1]{{\big(#1\big)}}
\newcommand{\Bat}[1]{{\Big(#1\Big)}}

\newcommand{\triple}[3]{{\left({#1},\,{#2},\,{#3}\right)}}

\newcommand{\btriple}[3]{{\big({#1},\,{#2},\,{#3}\big)}}


%


\newcommand{\bigpar}{\par\quad\newline\noindent}

\newcommand{\norm}[1]{\|{#1}\|}
\newcommand{\abs}[1]{\left|{#1}\right|}
\newcommand{\babs}[1]{\big|{#1}\big|}

\newcommand{\dint}[1]{\,\mathrm{d}#1}


\newcommand{\ga}{{\gamma}}

\newcommand{\la}{{\lambda}}
\newcommand{\si}{{\sigma}}


%
%

%
%


\newcommand{\calA}{\mathcal{A}}

\newcommand{\calF}{\mathcal{F}}

\newcommand{\calL}{\mathcal{L}}

\newcommand{\calP}{\mathcal{P}}
\newcommand{\calQ}{\mathcal{Q}}


%
%
%
%
\usepackage[%
a4paper,%
nohead,%
twoside,
ignorefoot,
scale=0.825,
bindingoffset=1.25cm
]{geometry}%
%
%
%
%
%
\usepackage{float}%
\setcounter{topnumber}{10}%
\setcounter{totalnumber}{10}%
\setcounter{bottomnumber}{10}%
\sloppy
%
\begin{document}%
%
%
%
%
%
%
%
\title{Ground waves in atomic chains \\ with bi-monomial double-well potential}%
\date{\today}%
\author{Michael Herrmann\footnote{Universit\"at des Saarlandes, FR Mathematik, {\tt{michael.herrmann@math.uni-sb.de}}}}
\maketitle
%
%
%
%
\begin{abstract}%
Ground waves in atomic chains are traveling waves that corresponds to minimal non-trivial critical values of the underlying action functional. In this paper we study FPU-type chains with bi-monomial double-well potential and prove the existence of both
periodic and solitary ground waves. To this end we minimize the action on the Nehari manifold and show
that periodic ground waves converge to solitary ones. Finally, we compute
ground waves numerically by a suitable discretization of a constrained gradient flow.
\end{abstract}%
%
%
\quad\newline\noindent%
\begin{minipage}[t]{0.15\textwidth}%
Keywords: %
\end{minipage}%
\begin{minipage}[t]{0.8\textwidth}%
\emph{Fermi-Pasta-Ulam chain with double-well potential}, \\
\emph{Nehari manifold}, \emph{ground waves}%
\end{minipage}%
\medskip
\newline\noindent
\begin{minipage}[t]{0.15\textwidth}%
MSC (2010): %
\end{minipage}%
37K60, 
47J30,  
74J30  
\begin{minipage}[t]{0.8\textwidth}%
\end{minipage}%

%
%
%
%
\setcounter{tocdepth}{5} %
\setcounter{secnumdepth}{3}{\scriptsize{\tableofcontents}}%
%
\section{Introduction}

Atomic chains with nearest neighbor interactions, which are usually called FPU-type chains,  are ubiquitous in physics and materials science as they provide simple atomistic models for solids. They are moreover prototypical examples for nonlinear Hamiltonian lattice equations and shed light on the dynamical properties of discrete media with dispersion.
\par
The lattice equation for infinite FPU-type chains stems from Newton's law of motion and reads
\begin{align}
\label{Eqn:FPU}
\ddot{x}_j=\Phi^\prime\bat{x_{j+1}-x_j}-\Phi^\prime\bat{x_j-x_{j-1}}\,,
\end{align}
where $\Phi$ is the interaction potential and $x_j=x_j\at{t}\in\Rset$ denotes the position of atom $j\in\Zset$ at time $t$. 
\par
Coherent structures such as traveling waves are of particular interest in the analysis of nonlinear lattice equations since they can be regarded as the nonlinear fundamental modes and describe how energy propagates through the chain. Traveling waves are special solutions to \eqref{Eqn:FPU} which depend on a one-dimensional phase variable $\phase=j-\si{t}$ via the ansatz 
\begin{align*}
x_j\at{t}=rj+vt+X\at{j-\si{t}}\,.
\end{align*}
Here, $\si$ the phase speed, $r$ and $v$ are given constants, and the profile function $X$ satisfies
\begin{align}
\label{Eqn:TW1}
\si^2X^{\prime\prime}\at\phase=\Phi^\prime\bat{r+X\at{\phase+1}-X\at\phase}-\Phi^\prime\bat{r+X\at\phase-X\at{\phase-1}}\,.
\end{align}
It can easily be shown that this advance-delay differential equation is equivalent to the nonlinear eigenvalue problem 
\begin{align}
\label{Eqn:TW3}
\si^2W=\calA\Phi^\prime\at{\calA W} + \mu
\end{align}%
for the function $W$ defined by $W\at\phase=r + X^\prime\at{\phase}$. Here, $\mu$ is some constant of integration and the convolution operator $\calA$ is defined by
\begin{align}
\label{Eqn:DefA}
\at{\calA{W}}\at{\phase}=\int_{{\phase}-1/2}^{{\phase}+1/2}W\at{s}\dint{s}\,.
\end{align}
Traveling waves in atomic chains have been studied intensively during the last two decades, and the existence of several types of solutions to \eqref{Eqn:TW1} has been established for various interaction potentials by different methods. For small amplitude waves, the standard references are \cite{FP99} for the continuum approximation
of near-sonic waves and \cite{Ioo00} for bifurcation results via spatial dynamics. The existence of waves with non-small amplitudes has been proven by different variational or critical point techniques, see for instance \cite{FW94,FV99,PP00,SW97,FM02,SZ07,HR10b}, which we discuss below 
in greater detail.
\par
In this paper we consider chains with double-well potential, which play an important role in the atomistic theory of martensitic phase transitions.  In order to keep the presentation as simple as possible, we restrict ourselves to bi-monomial potentials with
\begin{align}
\label{Eqn:Potential}
\Phi\at{w}= d_p\abs{w}^p-d_q\abs{w}^q\quad\text{with} \quad 2<p\,,\quad 1<q<p\,,
\end{align}
and by a simple scaling we can achieve that $d_p=d_q=1$. We further restrict our considerations to the case $\mu=0$, see the discussion below, and seek non-trivial solutions to the traveling wave equation
\begin{align}
\label{Eqn:TW2}
\si^2W+\calA \Psi_{q}\at{\calA W}=\calA\Psi_{p}\at{\calA W}\,,
\end{align}
where the functions $\Psi_r$ are defined by $\Psi_r\at{w}:=r\sgn\at{w}\abs{w}^{r-1}$. 
\paragraph{Variational setting}
Due to the Hamiltonian nature and the shift invariance of \eqref{Eqn:FPU}, there is a variational characterization of traveling waves with prescribed $\si^2$. To see this for both periodic and solitary waves, we introduce 
a positive parameter $K\in\ocinterval{0}{\infty}$. For finite $K$,
we study $2K$-periodic waves and regard
$I_K:=\ocinterval{-K}{K}$
as the periodicity cell; the case $K=\infty$ corresponds to
solitary waves defined on $I_\infty:=\Rset$. In what follows we denote 
by $\fspaceL^r\at{I_K}$ and $\fspaceW^{1,r}\at{I_K}$ the usual Sobolev spaces of $I_K$-periodic functions (or functions on $\Rset$ for $K=\infty$), and write
\begin{align*}
\norm{\cdot}_{r,I_K} \quad\text{and}\quad\skp{\cdot}{\cdot}_{I_K}
\end{align*}
for the norm on $\fspaceL^r\at{I_K}$ and the scalar product in $\fspaceL^2\at{I_K}$, respectively. Moreover, we define
\begin{align*}
\fspace{X}_K:=
\Big\{ W\in\fspaceL^2\at{I_K}\;:\; \calA{W}\in\fspaceL^q\at{I_K}\cap\fspaceL^p\at{I_K}\Big\}
\end{align*}
as ansatz space for periodic and solitary waves.
\par
The variational formulation of \eqref{Eqn:TW2} relies
on the \emph{Lagrangian action functional}
\begin{align*}
\calL_K\at{W}:=\tfrac{1}{2}\si^2\norm{W}_{2,I_K}^2+\calQ_K\at{W}-\calP_K\at{W}\,,
\end{align*}
which is well defined on $\fspace{X}_K$, where
$\tfrac{1}{2}\si^2\norm{W}_{2,I_K}^2$ can be regarded as the kinetic energy, 
and 
\begin{align*}
\calQ_K\at{W}:=\int_{I_K} \abs{\calA W}^q\dint{\phase}\,,\qquad
\calP_K\at{W}:=\int_{I_K} \abs{\calA W}^p\dint{\phase}\,,
\end{align*}
give the two contributions to the potential energy $\calP_K-\calQ_K$. The functionals $\calL_K$,
$\calQ_K$, and $\calP_K$
are also  G\^ateaux-differentiable $\fspace{X}_K$ with derivatives
\begin{align*}
\partial\calL_K\at{W}&=\si^2{W}+\partial\calQ_K\at{W}-\partial\calP_K\at{W}\,,\\
\partial\calQ_K\at{W}&=\calA\Psi_q\at{\calA{W}}\,,\\
\partial\calP_K\at{W}&=\calA\Psi_p\at{\calA{W}}\,,
\end{align*}
and we conclude that each solution $W\in\fspace{X}_K$
to the traveling wave equation \eqref{Eqn:TW2} is a critical point of $\calL_K$, and vice versa.
\paragraph{Main result and organisation of paper}
In this paper we establish the existence of periodic and solitary \emph{ground waves} for
the special potential \eqref{Eqn:Potential}. By definition, a ground wave is a traveling wave that corresponds to a minimal non-trivial critical value of the action functional $\calL_K$. Our main result can be summarized as follows.
\begin{theorem}
\label{Thm:MainResult}
Let $\si^2>0$ be given. Then, for each $K\in\ocinterval{0}{\infty}$ there exists a ground wave $W_{K}\in\fspace{X}_K$ which satisfies \eqref{Eqn:TW2} as well as
\begin{align*}
0<\calL_K\at{W_{K}}=\min\Big\{\calL_K\at{W}\;:\;W\in\fspace{X}_K\setminus\{0\}\text{ with  }\partial\calL_K\at{W}=0\Big\}\,,
\end{align*}
and which is non-constant provided that $K$ is sufficiently large. Moreover, 
solitary ground waves can be approximated (in some strong sense) by period ground waves.
\end{theorem}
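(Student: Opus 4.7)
The plan is to realise ground waves as minimisers of $\calL_K$ on the Nehari manifold
\[
\calN_K:=\bigl\{W\in\fspace{X}_K\setminus\{0\}\;:\;\skp{\partial\calL_K\at{W}}{W}_{I_K}=0\bigr\},
\]
which contains every non-trivial critical point by construction. Using that $\calA$ is self-adjoint and the homogeneities of $\Psi_q,\Psi_p$, the Nehari constraint is equivalent to $\si^2\norm{W}_{2,I_K}^2+q\calQ_K\at{W}=p\calP_K\at{W}$, and substitution into $\calL_K$ gives on $\calN_K$ the manifestly positive expression
\[
\calL_K\at{W}=\tfrac{p-2}{2p}\si^2\norm{W}_{2,I_K}^2+\tfrac{p-q}{p}\calQ_K\at{W}.
\]
Since $p>\max\{2,q\}$, the map $t\mapsto\skp{\partial\calL_K\at{tW}}{tW}$ is strictly positive for small $t>0$ and tends to $-\infty$ as $t\to\infty$ whenever $\calP_K\at{W}>0$, so any such $W\neq0$ can be radially projected onto $\calN_K$, which is therefore non-empty.

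Next I would establish $\inf_{\calN_K}\calL_K>0$. The pointwise estimate $|\calA W\at{\phase}|\leq\norm{W}_{2,I_K}$ obtained from Cauchy--Schwarz on a unit window yields $\calP_K\at{W}\leq C\norm{W}_{2,I_K}^p$, which combined with the Nehari inequality $\si^2\norm{W}_{2,I_K}^2\leq p\calP_K\at{W}$ forces a uniform lower bound $\norm{W}_{2,I_K}\geq c>0$ on $\calN_K$. For finite $K$ I then extract a minimising sequence, which is bounded in $\fspace{X}_K$ by the Nehari identity, and take a weak limit. Strong convergence of the potential terms comes from the smoothing of $\calA$ together with the compact embedding $\fspaceW^{1,r}\at{I_K}\hookrightarrow\fspaceL^r\at{I_K}$ on the bounded periodicity cell, so the weak limit lies in $\calN_K$ and attains the minimum. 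A Lagrange-multiplier computation, combined with the homogeneity of the radial projection, then shows that this minimiser satisfies $\partial\calL_K\at{W_K}=0$.

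For the non-constancy claim I would compare two test configurations: projecting a non-zero constant onto $\calN_K$ produces an action of order $K$ (the integrand is constant on a domain of size $K$), whereas a fixed non-constant compactly supported bump projected onto $\calN_K$ has an action bounded uniformly in $K$, so the minimum cannot be attained on constants once $K$ is large. Finally, to treat $K=\infty$ and prove the approximation statement, I would take the sequence $\{W_K\}$ of periodic ground waves, translate each so that the bulk of its mass sits near the origin, and pass to a weak limit $W_\infty$ in $\fspace{X}_\infty$ along $K\to\infty$, using that the Nehari identity forces uniform $\fspace{X}_K$-bounds independent of $K$.

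The main obstacle is precisely this limit: translation invariance of $\Rset$ destroys direct compactness, and I would invoke a Lions-type concentration--compactness dichotomy on the translated sequence. Vanishing is ruled out by the uniform lower bound $\norm{W_K}_{2,I_K}\geq c>0$ together with a concentration lemma adapted to the convolution $\calA$; dichotomy is excluded by establishing strict sub-additivity of the ground-wave action with respect to splitting the mass, a property that follows from the super-quadratic growth of the leading term $\calP_\infty$ and the positivity of $\calL_\infty$ on $\calN_\infty$. Once concentration is in place, $W_\infty\in\calN_\infty$ realises the infimum, and strong convergence of the translates of $W_K$ to $W_\infty$ yields simultaneously the existence of a solitary ground wave and the approximation by periodic ones.
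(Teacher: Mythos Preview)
Your handling of the periodic case and of non-constancy tracks the paper: direct minimisation on the Nehari manifold using that $\calA$ is compact on a bounded periodicity cell, a Lagrange-multiplier identity (the paper phrases this via a constrained gradient flow, but the content is identical), and comparison of the constant Nehari element, whose action grows linearly in $K$, against a fixed localised competitor.

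For $K=\infty$ the paper takes a more elementary route that avoids concentration--compactness altogether, and two ingredients you omit are what make it work. First, a separate truncation-and-periodisation lemma gives the upper bound $\limsup_{K\to\infty}\ell_K\leq\ell_\infty$ for $\ell_K:=\inf\calL_K|_{\calN_K}$; your bump comparison only yields $\ell_K\leq C$, and without the sharp bound you cannot conclude that the limiting wave realises $\ell_\infty$ rather than some larger critical value. Second, the paper exploits that each periodic minimiser $W_K$ is already a \emph{solution} of \eqref{Eqn:TW2}: this gives uniform $\fspace{BC}^1$ control on $\calA W_K$, so after translating to centre the $\fspaceL^\infty$-maximum of $\calA W_K$ (which is uniformly bounded below on the Nehari manifold) one obtains a fixed positive lower bound on a fixed interval, and the weak limit $W_\infty$ is a nonzero solution, hence lies in $\calN_\infty$. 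Weak lower semicontinuity of your positive Nehari expression then gives $\ell_\infty\leq\calL_\infty(W_\infty)\leq\liminf\ell_{K_n}$, and the squeeze against the upper bound forces equality everywhere, which upgrades weak to strong convergence. No dichotomy analysis is needed.

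Your dichotomy step, by contrast, has a genuine gap. There is no conserved mass on the Nehari manifold against which to state ``strict sub-additivity'', and super-quadratic growth of $\calP_\infty$ does not supply one. If you split $W_n\approx U_n+V_n$ with separating supports and project each piece onto $\calN_\infty$, the radial projection moves to the \emph{maximum} of $\calL_\infty$ along the ray, so the naive inequality points the wrong way. A salvage is possible---since $\calF_\infty(U_n)+\calF_\infty(V_n)\to0$, one piece has $\calF_\infty\leq0$ along a subsequence, its fibering maximiser satisfies $\bar\zeta_\infty\leq1$, and monotonicity in $\zeta$ of the positive Nehari form then yields the contradiction---but this is not the mechanism you invoke, and even with it in place you still need the missing upper bound $\limsup_{K\to\infty}\ell_K\leq\ell_\infty$ to identify the limit as a ground wave.
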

\noindent
Before commenting on the main ideas in the proof, we proceed with the following
remarks:
\begin{enumerate}
\item
Our proof reveals that each ground waves $W_K$ satisfies
\begin{align*}
\calL_K\at{W_K}=\min_{W\in\fspace{X}_K\setminus\{0\}}\max_{\zeta>0}\calL_K\at{\zeta W}\,,
\end{align*}
that means ground waves are not minimizers but saddle points of $\calL_K$.
\item The traveling wave equation \eqref{Eqn:TW2} and the properties of the convolution operator
$\calA$, see Lemma \ref{Lem:AProps} below, ensure that travelling waves belong to
$\fspace{BC}^1\at{\Rset}$ and that solitary ground waves are
homoclinic via $\lim_{\phase\to\pm\infty}W\at{\phase}=0$.
Moreover, each ground wave is supersonic in the sense of $\si^2>0\geq\Phi^{\prime\prime}\at{0}$.

\item
The set of all ground waves for given $\si^2>0$ is obviously invariant under shifts $W\at\phase\rightsquigarrow{W}\at{\phase+\phase_0}$, reflections 
$W\at\phase\rightsquigarrow{W}\at{-\phase}$, and sign changes $W\at\phase\rightsquigarrow{-W\at\phase}$. 
\item 
Since $\Phi$ is a double-well potential, there exists an \emph{amplitude threshold}, this means a lower bound for 
the strain amplitude $\norm{AW}_{\infty,I_K}$ independent of $\si$ and $K$, see
Remark \ref{Rem:AmpThreshold} below. This is different from the case of convex potentials as these support
near sonic waves with arbitrarily small amplitudes \cite{FP99,Her10a}.
\item 
Numerical simulations as presented in \S\ref{sec:num} indicate that the solitary ground waves
provided by Theorem \ref{Thm:MainResult} have, for sufficiently small $\si$,
oscillatory tails. This is also different from the case of convex potentials, where 
the profile function $W$ of solitary waves usually has only one local extremum \cite{Her10a}.%
\item 
Although our existence result is restricted to bi-monomial double-well potentials, we expect that
the underlying ideas can -- for the price of more technical effort -- be generalized
to more general double-well potentials $\Phi\at{w}=-\Phi_1\at{w}+\Phi_2\at{w}$ as long as
$\at{i}$ the partial potentials $\Phi_1$ and $\Phi_2$ are convex, $\at{ii}$ 
$\Phi_1$ dominates $\Phi_2$ for all sufficiently small  $\abs{w}$, and $\at{iii}$
$\Phi_2$ grows superquadratically and, at least for large $\abs{w}$, faster than 
$\Phi_1$.
\end{enumerate}
We also mention that $\calL_K$ has, at least for $K<\infty$, 
infinitely many critical points in $\fspace{X}_K$. In particular, each ground wave for $K/n$ with $n\in\Nset$ is also a critical point of $\calL_K$, but qualitatively different types of traveling waves might exist as well. Moreover, 
for $K=\infty$ we expect that there also exist traveling waves with non-decaying profile $W$. Of particular importance are phase transitions waves, which are heteroclinic connections of two period waves corresponding to either one of wells. Unfortunately, very little is known about their existence. The only available results concern bi-quadratic potentials, which allow for solving \eqref{Eqn:TW1} by Fourier methods \cite{TV05,SZ09}, or almost bi-quadratic potentials, for which we
can employ perturbation methods \cite{HMSZ12}. It remains a challenging task to find alternative, maybe variational, existence proofs for phase transition waves that apply to more general double-well potentials.
\bigpar
In order to prove Theorem \ref{Thm:MainResult}, we introduce in \S\ref{sec:setting} the Nehari manifold $\nehari_K$, which has co-dimension $1$ and contains all non-trivial critical points of $\calL_K$. In \S\ref{sec:min.per} we employ the direct method from the calculus of variations and show that the functional $\calL_K$ attains its minimum on $\nehari_K$  for $K<\infty$. Afterwards in \S\ref{sec:min.sol} we demonstrate that periodic ground waves with $K\to\infty$ provide minimizing sequences for $\calL_\infty|_{\nehari_\infty}$ and converge to a solitary ground wave. Finally, we compute ground waves numerically in \S\ref{sec:num}.
\paragraph{Comparison with other results}
We finally compare our results and methods with previous work on traveling waves in
FPU-type chains.
\par
At first we mention that the amplitude threshold implies that 
traveling waves in double-well potentials cannot be obtained by
perturbation arguments applied to the trivial wave $W\at\phase\equiv0$. Moreover, also the
available existence proofs via constrained optimization does not cover 
potential~\eqref{Eqn:Potential}.  Specifically,
both the method used by Friesecke and Wattis in 
\cite{FW94} (minimization of kinetic energy under prescribed potential energy)
and the approach discussed in \cite{Her10a} (maximization of potential energy under prescribed kinetic energy) require --
among other conditions --
that the asymptotic state of the solitary wave $w_*=\lim_{\phase\to\pm}W\at\phase$ is a minimum
of the tilted potential 
\begin{align}
\label{Eqn:TiltedPot}
\Phi_*\at{w}=\Phi\at{w_*+w}-\Phi^\prime\at{w_*}{w}-\Phi\at{w_*}\,,
\end{align}
and this condition is apparently not satisfied for the waves provided by Theorem \ref{Thm:MainResult}
(notice that in our case we have $w_*=0$ and hence $\Phi_*=\Phi$). 
\par
Theorem \ref{Thm:MainResult} is, however, not the first existence result for
traveling waves in bi-monomial double-well potentials. Smets and Willem
\cite{SW97} combine the Mountain Pass Theorem with weak convergence methods
to establish the existence of solitary waves for a large class of potentials including \eqref{Eqn:Potential}.
However, these waves are not necessarily ground waves and it is not clear
whether they can be approximated by periodic ones. A further drawback is that 
solutions provided by mountain pass arguments are hard to compute numerically.
\par
Closely related to our work is the discussion of ground waves given by Pankov in \cite[Section 3.4]{Pan05}. The results presented there imply the assertions of Theorem \ref{Thm:MainResult} for the special case $q=2$ and are likewise based on the Nehari manifold and approximation with periodic waves. The proof, however, is different as it employs the Mountain Pass Theorem and the Palais-Smale condition for $K<\infty$; see \S\ref{sec:min} for more details.
\par
A variant of the Mountain Pass Theorem was also used by Schwetlick and Zimmer \cite{SZ07} to construct homoclinic waves for certain double-well potentials. These waves satisfy $W=w_*+\tilde{W}$ with $\lim_{\abs\phase\to\infty} \tilde{W}\at\phase=0$, where $w_*$ is one of the local minimizer of $\Phi$. The key idea is that the relative profile $\tilde{W}$ can be regarded as solitary wave with respect to the tilted potential \eqref{Eqn:TiltedPot}, which has, at least for certain double-well potentials, sufficiently nice properties. The relation to our approach becomes apparent in the periodic case. Instead of tilting the potential we can impose the constraint $\abs{I_K}^{-1}\int_{I_K}W\dint\phase=w_*$, and we easily check that critical points of $\calL_K$ now satisfy the traveling wave equation \eqref{Eqn:TW3} with Lagrangian multiplier $\mu\in\Rset$. Due to the constraint, however, the corresponding traveling waves
are not ground waves for the action $\calL_K$.
\par
%
%
%

\section{Preliminaries and Nehari manifold}\label{sec:setting}
%
In this section we develop our variational framework for both finite and infinite $K$ and introduce the Nehari manifold, on which we minimize the action in \S\ref{sec:min}. The parameter $\si^2>0$ is from now on fixed.
\par
We first summarize some properties of the convolution operator $\calA$. In particular, we show 
that $\calA$ maps $\fspaceL^2\at{I_K}$ compactly into $\fspaceL^r\at{I_K}$ provided that $1\leq{r}<\infty$ and $K<\infty$.
\begin{lemma}
\label{Lem:AProps}
Let $K\in\ocinterval{0}{\infty}$ and $1\leq{r}<\infty$ be given. Then, the linear operator $\calA$ maps $\fspaceL^{r}\at{I_K}$ continuously into $\fspaceW^{1,r}\at{I_K}\subset\fspace{BC}\at\Rset$ with
\begin{align*}
\norm{\at{\calA{W}}^\prime}_{r,I_K}\leq 2\norm{W}_{r,I_K}\,,\qquad
\norm{\calA{W}}_{\infty,I_K}\leq \norm{W}_{r,I_K}\,,\qquad
\norm{\calA{W}}_{r,I_K}\leq \norm{W}_{r,I_K}\,.
\end{align*}
Moreover, $W_n\rightharpoonup{W}_\infty$ weakly in $\fspaceL^2\at{I_K}$ implies $\calA{W_n}\to\calA{W}_\infty$ pointwise for all $K$ and also strongly in $\fspaceL^r\at{I_K}$ for $K<\infty$.
\end{lemma}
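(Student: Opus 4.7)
My plan splits naturally into the four assertions and a compactness claim.

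\textbf{Quantitative bounds.} I would first observe that $\calA$ is convolution with the $L^1(\Rset)$ kernel $\chi_{[-1/2,1/2]}$, which has $L^1$-norm equal to $1$. Young's inequality (equivalently, Minkowski's integral inequality applied directly to $\calA W(\phase)=\int \chi_{[-1/2,1/2]}(s)\,W(\phase-s)\,\ddiff s$) then gives $\norm{\calA W}_{r,I_K}\leq\norm{W}_{r,I_K}$. For the sup-bound, Hölder's inequality on an interval of length one yields
\begin{align*}
\bigl|\calA W(\phase)\bigr|\leq\int_{\phase-1/2}^{\phase+1/2}|W(s)|\,\ddiff s\leq\Bigl(\int_{\phase-1/2}^{\phase+1/2}|W(s)|^{r}\,\ddiff s\Bigr)^{1/r}\leq\norm{W}_{r,I_K},
\end{align*}
where the last step uses $2K$-periodicity (for $K<\infty$ with $2K\geq 1$; the exceptional case $2K<1$ is handled by exploiting that the integration interval then sweeps the period several times). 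Finally, the identity $(\calA W)'(\phase)=W(\phase+1/2)-W(\phase-1/2)$, valid first in the distributional sense and then almost everywhere by an approximation argument, combined with translation invariance of $\norm{\cdot}_{r,I_K}$ and the triangle inequality, produces the derivative bound with constant $2$. The three estimates together show $\calA: \fspaceL^{r}\at{I_K}\to\fspaceW^{1,r}\at{I_K}$ is bounded; the inclusion $\fspaceW^{1,r}\at{I_K}\subset\fspace{BC}\at{\Rset}$ is the standard one-dimensional Sobolev embedding (the continuity of $\calA W$ is also manifest because $\phase\mapsto\calA W(\phase)$ is absolutely continuous as an integral with moving endpoints).

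\textbf{Pointwise convergence from weak convergence.} The decisive observation is the representation
\begin{align*}
\bigl(\calA W\bigr)\at{\phase}=\bskp{W}{\chi_{[\phase-1/2,\phase+1/2]}}_{I_K},
\end{align*}
where on the periodic torus $\chi_{[\phase-1/2,\phase+1/2]}$ is understood as the $2K$-periodic extension of the characteristic function. In both cases ($K$ finite or infinite) this test function lies in $\fspaceL^{2}\at{I_K}$, so $W_n\rightharpoonup W_\infty$ in $\fspaceL^{2}\at{I_K}$ immediately yields $\calA W_n(\phase)\to\calA W_\infty(\phase)$ for every $\phase\in\Rset$.

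\textbf{Strong $\fspaceL^{r}$ convergence for $K<\infty$.} Weakly convergent sequences are norm-bounded, so by the first step $\{\calA W_n\}_{n\in\Nset}$ is bounded in $\fspaceW^{1,2}\at{I_K}$. Since $I_K$ is a bounded one-dimensional domain, the Rellich--Kondrachov theorem provides a compact embedding $\fspaceW^{1,2}\at{I_K}\hookrightarrow\fspaceL^{r}\at{I_K}$ for every finite $r$. Hence some subsequence of $\calA W_n$ converges strongly in $\fspaceL^{r}\at{I_K}$; the pointwise identification from the previous step forces the limit to be $\calA W_\infty$, and a standard subsequence-of-subsequence argument promotes the convergence to the full sequence.

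\textbf{Main obstacle.} None of the individual arguments is deep; the only real care required is the distinction between $K<\infty$ and $K=\infty$. On the full line $\calA$ is translation-invariant and therefore \emph{not} compact as an operator on $\fspaceL^{2}\at{\Rset}$, which is precisely why the lemma stops at pointwise convergence when $K=\infty$. For finite $K$ the compact embedding rescues strong $\fspaceL^{r}$-convergence, but one must handle the small-period edge case $2K<1$ carefully (either by absorbing overlap factors into the bounds or by noting that the sup-bound proof uses integration over one full period).
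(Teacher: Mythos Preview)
Your proposal is correct and the quantitative bounds and the pointwise convergence are obtained essentially as in the paper. The one genuine difference is in the strong $\fspaceL^{r}$ convergence for $K<\infty$: the paper does not invoke Rellich--Kondrachov but instead uses the already established $\fspaceL^{\infty}$ bound. Since $W_n\rightharpoonup W_\infty$ gives $\sup_n\norm{W_n}_{2,I_K}<\infty$, the estimate $\norm{\calA W_n}_{\infty,I_K}\leq\norm{W_n}_{2,I_K}$ provides a uniform pointwise dominator on the bounded set $I_K$, and the Dominated Convergence Theorem together with the pointwise convergence yields $\calA W_n\to\calA W_\infty$ in $\fspaceL^{r}\at{I_K}$ directly, with no subsequence extraction needed. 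Your route via the compact embedding $\fspaceW^{1,2}\at{I_K}\hookrightarrow\fspaceL^{r}\at{I_K}$ is equally valid and makes the compactness of $\calA$ more explicit, at the mild cost of an additional subsequence-of-subsequence step; the paper's DCT argument is a touch more elementary and self-contained.
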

\begin{proof}
Thanks to $\at{\calA{W}}^\prime\at\phase=W\at{\phase+1/2}-W\at{\phase-1/2}$ and since H\"older's inequality implies
\begin{align*}
\babs{\at{\calA{W}}\at{\phase}}^r\leq\int_{\phase-1/2}^{\phase+1/2}\abs{W\at{s}}^r\dint{s}\,,
\end{align*}
all estimates follows immediately. Moreover, the pointwise convergence $\calA{W_n}\to\calA{W}$ follows from the definition of $\calA$ in \eqref{Eqn:DefA}, and implies the strong convergence for $K<\infty$ due to  $\fspaceL^\infty\at{I_K}\subset\fspaceL^r\at{I_K}$ and  the Dominated Convergence Theorem.
\end{proof}
\begin{remark}
\label{Rem:KernelA}%
For $K=\infty$ or $K\notin\pi\Qset$, the operator $\calA:\fspaceL^2\at{I_K}\to\fspaceL^2\at{I_K}$ has only trivial kernel.
\end{remark}
\begin{proof} The operator $\calA$ diagonalizes in Fourier space via 
$\calA\mhexp{\iu k\phase}=\varrho\at{k/2}\mhexp{\iu{k}\phase}$
for all $k\in\Rset$, where
$\varrho\at{\kappa}=\sin\at{\kappa}/\kappa$, and the
assertion follows immediately.
\end{proof}
Notice that Lemma \ref{Lem:AProps} implies
$\fspace{X}_K=\fspaceL^2\at{I_K}$ for $K<\infty$,
as well as
$\calP_K\at{W}\leq\norm{W}_{2,I_K}^{p-q}\calQ_K\at{W}$ for 
all $K\in\ocinterval{0}{\infty}$ and $W\in\fspace{X}_K$.
\subsection{Necessary condition for traveling waves}
%
%
A key property of the action functional $\calL_K$ is that its restriction 
to the positive ray $\zeta>0\mapsto\zeta{W}$ has a unique maximizer for every non-degenerate $W$. To see this, we
start with an auxiliary result about the maximizers of certain tri-monomial functions. 
\begin{lemma}
\label{Lem:AuxRes}
For any $c=\triple{c_2}{c_q}{c_p}\in\Rset_+^3$ let the function
$\la_c:\Rset_+\to\Rset$ be defined by
\begin{align*}
\la_c\at{\xi}:=c_2\xi^2+c_q\xi^q-c_p\xi^p\,.
\end{align*}
Then, the function $\bar\xi\Rset_+^3\to\Rset$ with
\begin{align*}
\bar{\xi}\at{c}:=\argmax\limits_{\xi>0} \la_c\at{\xi}
\end{align*}
is well-defined, continuous, and  positive.
\end{lemma}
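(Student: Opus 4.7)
\noindent\emph{Proof plan.}\;
The plan is to study $\lambda_c$ directly on the ray $\xi\geq0$ and show that its derivative has a unique positive zero, which will then automatically be the global maximizer.

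First I would observe that $\lambda_c\at{0}=0$, that $\lambda_c\at\xi\to-\infty$ as $\xi\to\infty$ because the highest-order term $-c_p\xi^p$ dominates, and that $\lambda_c\at\xi>0$ for small $\xi>0$: the leading contribution is $c_2\xi^2$ if $q>2$, is $\at{c_2+c_q}\xi^2$ if $q=2$, and is $c_q\xi^q$ if $q<2$. By continuity a strictly positive maximum is therefore attained at some interior point of $\oointerval{0}{\infty}$.

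For the uniqueness step I would factor $\lambda_c'\at\xi=\xi\,\psi_c\at\xi$ on $\xi>0$, where
\begin{align*}
\psi_c\at\xi:=2c_2+q\,c_q\,\xi^{q-2}-p\,c_p\,\xi^{p-2}\,,
\end{align*}
so that positive critical points of $\lambda_c$ correspond to positive zeros of $\psi_c$. A short computation yields
\begin{align*}
\psi_c'\at\xi=\xi^{q-3}\Big(q\at{q-2}c_q-p\at{p-2}c_p\,\xi^{p-q}\Big)\,,
\end{align*}
and I would then split into two cases. If $q\leq 2$, the bracket is non-positive and not identically zero on $\oointerval{0}{\infty}$, so $\psi_c$ is strictly decreasing; since $\psi_c\at{0^+}>0$ (with value $+\infty$ when $q<2$) and $\psi_c\to-\infty$, this forces a unique positive zero. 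If $q>2$, the bracket decreases strictly from $q\at{q-2}c_q>0$ to $-\infty$, so $\psi_c$ is unimodal, and the end-point behavior $\psi_c\at{0^+}=2c_2>0$ together with $\psi_c\at{+\infty}=-\infty$ again forces a unique positive zero $\bar\xi\at{c}$, this time necessarily past the peak of $\psi_c$. In either sub-case I would conclude $\psi_c'\bat{\bar\xi\at{c}}<0$, hence $\lambda_c''\bat{\bar\xi\at{c}}=\bar\xi\at{c}\,\psi_c'\bat{\bar\xi\at{c}}<0$, so that $\bar\xi\at{c}$ is a strict local and therefore the global maximum of $\lambda_c$ on $\oointerval{0}{\infty}$; positivity is automatic.

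Continuity of $c\mapsto\bar\xi\at{c}$ would then follow from the implicit function theorem applied to $\psi_c\at\xi=0$ at $\xi=\bar\xi\at{c}$, using the non-degeneracy $\psi_c'\bat{\bar\xi\at{c}}\neq 0$ established above together with the joint smoothness of $\psi_c\at\xi$ in $\at{c,\xi}$. The only subtlety worth flagging is that $q$ may lie on either side of $2$; working with $\psi_c$ rather than with $\lambda_c'$ itself is what makes the monotonicity arguments uniform across all admissible exponents, which is why I would adopt that substitution from the start.
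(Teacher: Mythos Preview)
Your argument is correct, but the paper takes a slightly slicker route that avoids your case distinction on $q$ versus $2$. Instead of factoring $\lambda_c'(\xi)=\xi\,\psi_c(\xi)$, the paper works with $f_c(\xi):=\xi\,\lambda_c'(\xi)=2c_2\xi^2+qc_q\xi^q-pc_p\xi^p$; at any positive zero $\bar\xi$ of $f_c$ one can use the relation $pc_p\bar\xi^p=2c_2\bar\xi^2+qc_q\bar\xi^q$ to eliminate the $\xi^p$-term from $\bar\xi f_c'(\bar\xi)$, obtaining $(4-2p)c_2\bar\xi^2+(q^2-qp)c_q\bar\xi^q$, which is negative directly from $p>2$ and $p>q$ without ever asking whether $q\le 2$ or $q>2$. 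Your approach is equally valid and arguably more transparent, but requires the split because the sign of $q-2$ governs the shape of $\psi_c$. For continuity, the paper derives explicit two-sided bounds on $\bar\xi(c)$ and then argues via locally uniform convergence $\lambda_{c_n}\to\lambda_c$, whereas your implicit-function-theorem argument (made possible by the non-degeneracy $\psi_c'(\bar\xi(c))<0$) is cleaner though less quantitative. One small wording issue: ``strict local and therefore the global maximum'' is not an implication in general, but it is justified here since you already showed in your first paragraph that a positive interior maximum exists, and it must then coincide with the unique critical point.
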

\begin{proof}
The function $f_c$ with  
\begin{align*}
f_c\at\xi:=\xi \la_c^\prime\at\xi=2c_2\xi^2+qc_q\xi^q-pc_p\xi^p\,,
\end{align*}
is continuous on $R_+$ and satisfies
\begin{align*}
f_c\at{\xi}>0\qquad\text{and}\qquad
f_c\at{\xi}<0
\end{align*}
for small and large $\xi$, respectively. Therefore, 
there exists at least one zero $\bar\xi>0$ with
\begin{align}
\label{Lem:AuxRes.Eqn1}
0=f_c\at{\bar\xi}=
2c_2\bar{\xi}^2+qc_q\bar{\xi}^q-pc_p\bar{\xi}^p\,.
\end{align}
Thanks to \eqref{Lem:AuxRes.Eqn1} we also find 
\begin{align*}
\bar\xi f_c^\prime\bat{\bar\xi}=\at{4-2p}c_2\bar{\xi}^2+\at{q^2-qp}c_q\bar{\xi}^q<0\,,
\end{align*}
for any such zero $\bar\xi$, and this implies that
$f_c$ has precisely one zero $\bar\xi=\bar\xi\at{c}$. In particular, we
have
\begin{align*}
\la_c^\prime\at{\xi}>0\qquad\text{and}\qquad
\la_c^\prime\at{\xi}<0
\end{align*}
for $\xi<\bar\xi\at{c}$ and $\xi>\bar\xi\at{c}$, respectively,  and we conclude that $\bar\xi\at{c}$ is a global maximizer of $\la_c$
with 
\begin{align*}
0<\la_c\bat{\bar\xi\at{c}}=\max\limits_{\xi>0} \la_c\at{\xi}\,.
\end{align*}
Finally, using \eqref{Lem:AuxRes.Eqn1} once more we verify the estimates
\begin{align*}
\max \left\{
\at{\frac{2c_2}{pc_p}}^{1/\at{p-2}},\;
\at{\frac{qc_q}{pc_p}}^{1/\at{p-q}}
\right\}
\leq\bar\xi\at{c}\leq%
\max \left\{
\at{\frac{4c_2}{pc_p}}^{1/\at{p-2}},\;
\at{\frac{2qc_q}{pc_p}}^{1/\at{p-q}}
\right\},
\end{align*}
which in turn imply the claimed continuity of $\bar\xi$ since we have $\la_{c_n}\to{\la}_c$ uniformly 
as $c_n\to{c}\in\Rset_+^3$ on each compact subset of $\Rset_+$.
\end{proof}
\begin{corollary}
\label{Cor:MaxAlongRays}
For $K\in\ocinterval{0}{\infty}$ and $W\in\fspace{X}_K$ with $\calA{W}\neq0$ let
\begin{align*}
\bar{\zeta}_K\at{W}:=
\bar{\xi}\btriple{\tfrac{1}{2}\si^2\norm{W}_{2,I_K}^2}{\calQ_K\at{W}}{\calP_K\at{W}}\,.
\end{align*}
Then, we have
\begin{align*}
0<\calL_K\at{\bar\zeta_K\at{W}W}=\max_{\zeta>0}\calL_K\at{\zeta{W}}\,
\end{align*}
and $W_n\to W$ strongly in $\fspaceL^2\at{I_K}$ with $\calA W\neq0$ implies
$\bar\zeta_K\at{W_n}\to\bar{\zeta}_K\at{W}$.
\end{corollary}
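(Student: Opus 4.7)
The plan is to observe that the restriction of the action to the ray $\zeta\mapsto\zeta W$ is precisely a function of the form $\la_c$ from Lemma~\ref{Lem:AuxRes}, and then to invoke that lemma. Since $\calA\at{\zeta W}=\zeta\calA W$ and $\calQ_K$, $\calP_K$ are positively $q$- and $p$-homogeneous respectively, a direct substitution yields
\begin{equation*}
\calL_K\at{\zeta W} \;=\; \tfrac{1}{2}\si^2\zeta^2\norm{W}_{2,I_K}^2 + \zeta^q\calQ_K\at{W} - \zeta^p\calP_K\at{W} \;=\; \la_{c\at{W}}\at{\zeta},
\end{equation*}
with $c\at{W}=\btriple{\tfrac{1}{2}\si^2\norm{W}_{2,I_K}^2}{\calQ_K\at{W}}{\calP_K\at{W}}$. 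The assumption $\calA W\neq0$ forces $W\neq0$ and $\calA W$ to be nonzero on a set of positive measure, so all three components of $c\at{W}$ are strictly positive, i.e. $c\at{W}\in\Rset_+^3$. Lemma~\ref{Lem:AuxRes} then yields existence, uniqueness and positivity of $\bar\zeta_K\at{W}=\bar\xi\at{c\at{W}}$ as the global maximizer of $\zeta\mapsto\calL_K\at{\zeta W}$ on $\Rset_+$, together with positivity of the corresponding maximal value.

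For the continuity, Lemma~\ref{Lem:AuxRes} already provides continuity of $\bar\xi$ on $\Rset_+^3$, so it suffices to verify that $c\at{W_n}\to c\at{W}$ componentwise. The first entry converges because strong $L^2$-convergence gives $\norm{W_n}_{2,I_K}^2\to\norm{W}_{2,I_K}^2$. For $\calQ_K\at{W_n}\to\calQ_K\at{W}$ and $\calP_K\at{W_n}\to\calP_K\at{W}$ I would use Lemma~\ref{Lem:AProps}: in the periodic case $K<\infty$, the strong $L^2$-convergence $W_n\to W$ implies strong $\fspaceL^r$-convergence of $\calA W_n$ to $\calA W$ for every finite $r$, from which convergence of the $L^q$- and $L^p$-integrals follows by continuity of $f\mapsto\norm{f}_{r,I_K}^r$ on $\fspaceL^r\at{I_K}$. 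For $K=\infty$ I would combine strong $\fspaceL^2$- and strong $\fspaceL^\infty$-convergence of $\calA W_n$ to $\calA W$ (both provided by Lemma~\ref{Lem:AProps}) with the elementary pointwise estimate $\abs{\abs{a}^r-\abs{b}^r}\leq r\at{\abs{a}^{r-1}+\abs{b}^{r-1}}\abs{a-b}$ and H\"older's inequality.

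The main obstacle is precisely the solitary case $K=\infty$ with $q<2$: strong $L^2$-convergence on $\Rset$ by itself does not control the $L^q$-tails of $\calA W_n$. To close this gap, one has to exploit that $W_n,W\in\fspace{X}_\infty$, which forces $\calA W_n,\calA W\in\fspaceL^q\at{\Rset}\cap\fspaceL^p\at{\Rset}$, and then combine pointwise convergence with a domination or Vitali-type argument; interpolating between the available $\fspaceL^2$- and $\fspaceL^\infty$-convergence covers the $p$-integral, while a uniform $\fspaceL^q$-bound along the sequence is needed for the $q$-integral.
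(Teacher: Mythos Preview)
The paper states this result as an immediate corollary of Lemma~\ref{Lem:AuxRes} and gives no separate proof. Your identification $\calL_K\at{\zeta W}=\la_{c\at{W}}\at{\zeta}$ together with the observation that $\calA W\neq 0$ forces $c\at{W}\in\Rset_+^3$ is precisely the intended argument, and your treatment of the continuity claim for $K<\infty$ via the strong $\fspaceL^r$-convergence of $\calA W_n$ from Lemma~\ref{Lem:AProps} is correct and natural.

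Your diagnosis of the case $K=\infty$ with $q<2$ is in fact sharper than the paper itself: the continuity assertion, read literally, fails in that regime, and the fix you sketch cannot be completed from the stated hypothesis. For instance, take $W_n=W+g_n$ with $g_n=n^{-1}\chi_{\ccinterval{n}{n+n^\beta}}$ for any $q<\beta<2$; then $W_n\to W$ strongly in $\fspaceL^2\at{\Rset}$, each $W_n\in\fspace{X}_\infty$ with $\calA W_n\neq 0$, yet $\calQ_\infty\at{W_n}\to\infty$ while $\calP_\infty\at{W_n}\to\calP_\infty\at{W}$, so the lower bound in Lemma~\ref{Lem:AuxRes} gives $\bar\zeta_\infty\at{W_n}\to\infty$. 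The uniform $\fspaceL^q$-bound you ask for is therefore not a consequence of strong $\fspaceL^2$-convergence and membership in $\fspace{X}_\infty$, and no Vitali-type argument can manufacture it. This does not damage the paper: the only place the continuity of $\bar\zeta_\infty$ is actually invoked is in the proof of Lemma~\ref{Lem:LimSupEst}, where the approximating sequence additionally satisfies $\calA V_K\to\calA W_\infty$ in $\fspaceL^r\at{\Rset}$ for every $r\in\ccinterval{1}{\infty}$, and under that stronger convergence $c\at{V_K}\to c\at{W_\infty}$ is immediate.
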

\bigpar
In view of Corollary \ref{Cor:MaxAlongRays}, we introduce the functional
\begin{align*}
\calF_K\at{W}:=\frac{\dint}{\dint\zeta}\calL_K\at{\zeta{W}}|_{\zeta=1}=\bskp{\partial\calL_K\at{W}}{W}_{I_K}=\si^2\norm{W}_{2,I_K}^2+q\calQ_K\at{W}-p\calP_K\at{W}\,,
\end{align*}
which is well defined and  G\^ateaux differentiable on $\fspaceL^2\at{I_K}$ with
\begin{align*}
\partial\calF_K\at{W}=2\si^2{W}+q\calA\Psi_q\at{\calA{W}}-p\calA\Psi_p\at{\calA{W}}\,.
\end{align*}
We further define the \emph{Nehari manifold} by
\begin{align*}
\nehari_K&:=\Big\{W\in\fspace{X}_K\;:\;W\neq{0},\quad\calF_K\at{W}=0\Big\}\,,
\end{align*}
and notice that  $\bar\zeta_K\at{W}{W}\in\nehari_K$ for all $W$ with $\calA{W}\neq{0}$. Moreover, 
$W\in\nehari_K$ implies $\calA{W}\neq0$, and for $\calA{W}\neq0$ we have
\begin{align}
\notag
W\in\nehari_K\quad\Longleftrightarrow\quad \bar{\zeta}_K\at{W}=1
\quad\Longleftrightarrow\quad
\calL_K\at{W}=\max_{\zeta>0}\calL_K\at{\zeta{W}}\,.
\end{align}
\begin{remark}
\label{Rem:Nehari} %
Each non-vanishing traveling wave $W\in\fspace{X}_K$ belongs to $\nehari_K$ and $\fspace{BC}^1\at\Rset$.
\end{remark}
\begin{proof}
$W\neq0$ combined with the traveling wave equation \eqref{Eqn:TW2} implies $\calA{W}\neq0$ thanks to $\si\neq0$, and 
$W\in\nehari_K$ follows since testing \eqref{Eqn:TW2} with $W$ gives $\calF_K\at{W}=0$. Moreover,
$W\in\fspace{BC}^1\at\Rset$ is a direct consequence of Lemma \ref{Lem:AProps} and \eqref{Eqn:TW2}.
\end{proof}
Our strategy for proving the existence of ground waves is to show that
$\calL_K$ attains its minimum on $\nehari_K$. We can then conclude that   
each minimizer satisfies the traveling wave equation \eqref{Eqn:TW2}, and Remark 
\ref{Rem:Nehari} guarantees that the minimum is in fact the smallest non-trivial critical value of $\calL_K$.
%
%
\subsection{Properties of the Nehari manifold}
%
%
We next derive some estimates on the Nehari manifold.
\begin{lemma}
\label{Lem:Nehari.Estimates}
There exist positive constants $c$ and $C$, which both are independent of $K$ but can depend on $\si$, such that 
\begin{enumerate}
\item
\begin{math}
c\leq \norm{W}^2_{2,I_K}\leq{C}\calL_K\at{W},
\end{math}
\item
\begin{math}
c\leq\norm{\calA{W}}^2_{\infty,I_K}\leq{C}\calL_K\at{W},
\end{math}
\item 
\begin{math}
c\leq\calL_K\at{W},
\end{math}
\item 
\begin{math}
c\leq\calP_K\at{W}\leq{C}\calL_K\at{W},
\end{math}
\item 
\begin{math}
\calQ_K\at{W}\leq{C}\calL_K\at{W},
\end{math}
\item
\begin{math}
\bskp{\partial{F}_K\at{W}}{W}_{I_K}\leq-{c},
\end{math}

\end{enumerate}
hold for all $W\in\nehari_K$.
\end{lemma}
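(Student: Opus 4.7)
The plan is to derive everything from a single substitution. On $\nehari_K$ the constraint $\calF_K(W)=0$ reads $p\calP_K(W)=\si^2\|W\|_{2,I_K}^2+q\calQ_K(W)$, and inserting this into the definition of $\calL_K$ gives the master identity
\begin{align*}
\calL_K(W)=\frac{p-2}{2p}\,\si^2\|W\|_{2,I_K}^2+\frac{p-q}{p}\,\calQ_K(W),
\end{align*}
in which both coefficients are strictly positive thanks to $2<p$ and $q<p$. Since $\calQ_K(W)\ge0$ and $\|W\|_{2,I_K}^2\ge0$, the identity immediately gives the upper bounds in (i) and (v), and via the Nehari relation $p\calP_K=\si^2\|W\|_2^2+q\calQ_K$ also the upper bound in (iv). Item (iii) and the upper bound in (ii) will then follow since Lemma~\ref{Lem:AProps} gives $\|\calA W\|_{\infty,I_K}\le\|W\|_{2,I_K}$.

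For the lower bounds I would exploit the contractivity estimates of Lemma~\ref{Lem:AProps} together with the interpolation inequality
\begin{align*}
\calP_K(W)=\|\calA W\|_{p,I_K}^p\le\|\calA W\|_{\infty,I_K}^{p-2}\,\|\calA W\|_{2,I_K}^2\le\|W\|_{2,I_K}^{p-2}\,\|\calA W\|_{2,I_K}^2.
\end{align*}
Combined with $\calA W\neq0$ (which is forced by $W\in\nehari_K$ since $\calF_K(W)=0$ with $W\neq0$) and $\|\calA W\|_{2,I_K}\le\|W\|_{2,I_K}$, the Nehari condition yields $\si^2\|W\|_{2,I_K}^2\le p\,\calP_K(W)\le p\|W\|_{2,I_K}^p$, hence $\|W\|_{2,I_K}^{p-2}\ge\si^2/p$. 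This proves the lower bound in (i); the same chain gives $\|\calA W\|_{\infty,I_K}^{p-2}\ge\si^2/p$ for (ii), while the lower bound in (iv) follows from $p\calP_K\ge\si^2\|W\|_2^2$ and (i), and (iii) then comes from the master identity and (i).

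For (vi) I would use that $\calA$ is self-adjoint on $\fspaceL^2(I_K)$ to compute
\begin{align*}
\bskp{\partial\calF_K(W)}{W}_{I_K}=2\si^2\|W\|_{2,I_K}^2+q^2\calQ_K(W)-p^2\calP_K(W),
\end{align*}
and then eliminate $p^2\calP_K=p[\si^2\|W\|_2^2+q\calQ_K]$ using Nehari to obtain
\begin{align*}
\bskp{\partial\calF_K(W)}{W}_{I_K}=-(p-2)\si^2\|W\|_{2,I_K}^2-q(p-q)\calQ_K(W),
\end{align*}
which is bounded above by $-(p-2)\si^2\|W\|_{2,I_K}^2$ and hence, by the lower bound in (i), by a negative constant. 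There is no real obstacle here beyond the bookkeeping; the main point worth checking is that every constant appearing in the argument depends only on $p$, $q$, and $\si$, since the inequalities from Lemma~\ref{Lem:AProps} and the master identity hold uniformly in $K\in\ocinterval{0}{\infty}$.
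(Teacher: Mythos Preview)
Your proof is correct and follows essentially the same route as the paper: derive the master identity $\calL_K(W)=\tfrac{p-2}{2p}\si^2\|W\|_{2,I_K}^2+\tfrac{p-q}{p}\calQ_K(W)$ from the Nehari constraint, read off the upper bounds, and obtain the lower bounds by interpolating $\calP_K(W)$ between $\fspaceL^\infty$ and a lower Lebesgue norm and then cancelling. The computation for (vi) is identical to the paper's.

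There is one small but noteworthy difference in the interpolation step. You bound $\calP_K(W)\le\|\calA W\|_{\infty,I_K}^{p-2}\|\calA W\|_{2,I_K}^2$ and compare against the kinetic term $\si^2\|W\|_{2,I_K}^2$, which yields the $\si$-dependent lower bound $\|\calA W\|_{\infty,I_K}^{p-2}\ge\si^2/p$. The paper instead interpolates with the $\fspaceL^q$-norm, $\calP_K(W)\le\|\calA W\|_{\infty,I_K}^{p-q}\calQ_K(W)$, and compares against the $q\calQ_K$ term in the Nehari identity, giving $\|\calA W\|_{\infty,I_K}^{p-q}\ge q/p$ with no dependence on $\si$. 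For the lemma as stated (constants may depend on $\si$) your version is perfectly fine, but the paper's choice is what makes the subsequent amplitude-threshold remark (Remark~\ref{Rem:AmpThreshold}) work uniformly in $\si$.
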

\begin{proof}
Employing $\calF_K\at{W}=0$ and Lemma \ref{Lem:AProps} we estimate
\begin{align*}
q\calQ_K\at{W}\leq
\si^2\norm{W}_{2,I_K}^2+q\calQ_K\at{W}=p\calP_K\at{W}
\leq{p}\norm{\calA{W}}_{\infty,I_K}^{p-q}\calQ_K\at{W}\,,
\end{align*}
where $W\in\nehari_K$ implies $AW\neq0$ and hence $\calQ_K\at{W}>0$.
Therefore, and thanks to Lemma \ref{Lem:AProps}, we find
\begin{align}
\label{Lem:Nehari.Estimates.Eqn1}
\norm{W}_{2,I_K}\geq\norm{\calA{W}}_{\infty,I_K}\geq\at{\frac{q}{p}}^{1/\at{p-q}}\,,
\end{align}
and this implies
\begin{align}
\label{Lem:Nehari.Estimates.Eqn2}
p\calP_K\at{W}=
\si^2\norm{W}_{2,I_K}^2+q\calQ_K\at{W}
\geq
\si^2\norm{W}_{2,I_K}^2\geq \si^2\at{\frac{q}{p}}^{2/\at{p-q}}\,.
\end{align}
Due to $\calF_K\at{W}=0$ we also have
\begin{align*}
\calL_K\at{W}=\si^2\at{\frac{1}{2}-\frac{1}{p}}\norm{W}_{2,I_K}^2+\at{1-\frac{q}{p}}\calQ_K\at{W}
\geq
\si^2\at{\frac{1}{2}-\frac{1}{p}}\norm{W}_{2,I_K}^2,
\end{align*}
and combining this estimate with \eqref{Lem:Nehari.Estimates.Eqn1}, 
\eqref{Lem:Nehari.Estimates.Eqn2}, and $\calF_K\at{W}=0$ we arrive at the first five assertions.
Moreover, a direct computation yields
\begin{align*}
\bskp{\partial{F}_{K}\at{W}}{W}_{I_K}&=2\si^2\norm{W}_{2,I_K}^2+q^2\calQ_K\at{W}-p^2\calP_K\at{W}\\
&=\at{2-p}\si^2\norm{W}_{2,I_K}^2+\at{q^2-qp}\calQ_K\at{W}\leq-\abs{p-2}\si^2\norm{W}_{2,I_K}^2,
\end{align*}
which in turn implies the sixth assertion.
\end{proof}
The estimate \eqref{Lem:Nehari.Estimates.Eqn1} reveals 
that the lower bound for the strain amplitude in Lemma \ref{Lem:Nehari.Estimates} is actually independent of $\si$,
and this implies the amplitude threshold for traveling waves in bi-monomial double-well potentials. 
\begin{remark}
\label{Rem:AmpThreshold}
There exists a constant $c$ independent of both $K$ and $\si$
such that
\begin{align*}
\norm{AW}_{\infty,I_K}\geq{c},
\end{align*}
holds for any non-trivial traveling wave $W$.
\end{remark}
The last assertion in Lemma  \ref{Lem:Nehari.Estimates} ensures that the Nehari manifold $\nehari_K$ is strictly transversal to each positive ray $\zeta>0\mapsto\zeta{W}$ with $W\in\nehari_K$. It is therefore clear that minimizers of $\calL_K|_{M_K}$ are critical points of $\calL_K$.  Here we give an alternative proof of this assertion that relies
on the constrained gradient flow for $\calL_K$, that is 
\begin{align}
\label{Eqn:GradientFlow}
\tfrac{\dint}{\dint\tau}{W_\tau}=-\partial\calL_K\at{W_\tau}+\la_K\at{W_\tau}\partial{F}_K\at{W_\tau}\,,\qquad
\la_K\at{W}:=
\frac{\bskp{\partial\calL_K\at{W}}{\partial\calF_K\at{W}}_{I_K}}{\norm{\partial\calF_K\at{W}}_{2,I_K}^2}\,,
\end{align}
where $\tau$ is the flow time and $\tau\mapsto{W_\tau}$ denotes a curve in $\nehari_K$. This gradient flow 
is also the starting point for the numerical approximation of ground waves in \S\ref{sec:num}.
\begin{lemma}
\label{Lem:GradFlowProps}
For each $K\in\ocinterval{0}{\infty}$, the initial value problem to the $\nehari_K$-valued ODE \eqref{Eqn:GradientFlow} is well-posed. Moreover, $\calL_K$ is strictly decreasing along each non-stationary trajectory and each stationary point solves the traveling wave equation \eqref{Eqn:TW2}.
\end{lemma}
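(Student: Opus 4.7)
The plan is to verify, in order: (i) the Lagrange multiplier $\la_K\at{W}$ is well defined on $\nehari_K$; (ii) the right-hand side of \eqref{Eqn:GradientFlow} is tangent to $\nehari_K$ and admits local-in-$\tau$ solutions; (iii) trajectories extend for all $\tau\geq 0$; and (iv) $\calL_K$ strictly decreases along non-stationary orbits, with each equilibrium solving \eqref{Eqn:TW2}.

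For (i), assertion (6) of Lemma \ref{Lem:Nehari.Estimates} gives $\bskp{\partial\calF_K\at{W}}{W}_{I_K}\leq -c$ on $\nehari_K$; combined with the upper bound for $\norm{W}_{2,I_K}$ available on each sublevel set $\{\calL_K\leq\La\}\cap\nehari_K$, Cauchy--Schwarz yields a uniform lower bound $\norm{\partial\calF_K\at{W}}_{2,I_K}\geq c\at{\La}>0$, so $\la_K$ is well defined and continuous there. The tangency in (ii) is immediate from the defining ratio $\la_K$: differentiating the constraint along the flow gives
\begin{align*}
\tfrac{\dint}{\dint\tau}\calF_K\at{W_\tau}=\bskp{\partial\calF_K\at{W_\tau}}{-\partial\calL_K\at{W_\tau}+\la_K\at{W_\tau}\partial\calF_K\at{W_\tau}}_{I_K}=0.
\end{align*}
Local well-posedness then reduces to local Lipschitz continuity in $\fspaceL^2\at{I_K}$ of the superposition operators $W\mapsto\calA\Psi_r\at{\calA{W}}$ for $r\in\{q,\,p\}$, which follows from the boundedness of $\calA:\fspaceL^2\to\fspaceL^\infty$ (Lemma \ref{Lem:AProps}) and the local Lipschitz (or H\"older) regularity of $\Psi_r$ on bounded intervals of $\Rset$.

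Step (iv) is a direct calculation: inserting the flow equation into $\tfrac{\dint}{\dint\tau}\calL_K\at{W_\tau}=\bskp{\partial\calL_K\at{W_\tau}}{\tfrac{\dint}{\dint\tau}W_\tau}_{I_K}$ and using the explicit formula for $\la_K$ yields
\begin{align*}
\tfrac{\dint}{\dint\tau}\calL_K\at{W_\tau}=-\norm{\partial\calL_K\at{W_\tau}}_{2,I_K}^2+\frac{\bskp{\partial\calL_K\at{W_\tau}}{\partial\calF_K\at{W_\tau}}_{I_K}^2}{\norm{\partial\calF_K\at{W_\tau}}_{2,I_K}^2}\leq 0,
\end{align*}
where Cauchy--Schwarz saturates precisely when $\partial\calL_K\at{W_\tau}$ is a scalar multiple of $\partial\calF_K\at{W_\tau}$. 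At any stationary point $W$ we thus have $\partial\calL_K\at{W}=\la_K\at{W}\partial\calF_K\at{W}$; pairing with $W$ and invoking $\bskp{\partial\calL_K\at{W}}{W}_{I_K}=\calF_K\at{W}=0$ together with $\bskp{\partial\calF_K\at{W}}{W}_{I_K}<0$ forces $\la_K\at{W}=0$, so $\partial\calL_K\at{W}=0$ and $W$ solves \eqref{Eqn:TW2}. The global extension in (iii) is then automatic: monotonicity combined with assertion (1) of Lemma \ref{Lem:Nehari.Estimates} yields $\norm{W_\tau}_{2,I_K}^2\leq C\calL_K\at{W_0}$, ruling out $\fspaceL^2$-blow-up.

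I expect the main technical hurdle to lie in the local-Lipschitz step in the solitary case $K=\infty$, where Lemma \ref{Lem:AProps} provides no compactness, and in the exponent range $1<q<2$, where $\Psi_q$ is merely H\"older on bounded intervals. In the former one must rely purely on the pointwise bound $\norm{\calA{W}}_{\infty,\Rset}\leq\norm{W}_{2,\Rset}$ to transfer differences from $\calA{W_1}-\calA{W_2}$ back to $W_1-W_2$, and in the latter one can either restrict attention to $q\geq 2$, approximate $\Psi_q$ by smoother nonlinearities, or reinterpret \eqref{Eqn:GradientFlow} as a generalized gradient flow in the sense of a curve of maximal slope; the structural estimates of Lemma \ref{Lem:Nehari.Estimates} guarantee that the limiting object retains the monotonicity and equilibrium characterization needed in \S\ref{sec:num}.
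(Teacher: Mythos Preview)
Your argument is essentially the paper's proof: use Lemma~\ref{Lem:Nehari.Estimates}(6) to see that $\partial\calF_K$ does not vanish on $\nehari_K$ so that $\la_K$ is well defined, verify invariance of the constraint by differentiating $\calF_K$ along the flow, compute $\tfrac{\dint}{\dint\tau}\calL_K$ and apply Cauchy--Schwarz, and at a stationary point test $\partial\calL_K=\la_K\partial\calF_K$ against $W$ to force $\la_K=0$. Your extra material on global-in-$\tau$ existence and on the H\"older-only regularity of $\Psi_q$ for $1<q<2$ goes beyond what the paper actually proves (the paper only asserts that $\la_K$ is Lipschitz on a small ball and obtains a local solution, without addressing global extension or the low-$q$ regularity issue), so your caveats there are legitimate but do not diverge from the paper's line of reasoning.
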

\begin{proof} 
Let some initial datum $W_0\in\nehari_K$ be given. Lemma \ref{Lem:Nehari.Estimates} implies $\partial\calF_K\at{W_0}\neq0$, and by continuity there exists a small ball $B\subset\fspace{X}_K$ around $W_0$ such that the multiplier $\la_K$ is a well defined and Lipschitz continuous function on $B$. Consequently, 
there exists a local solution $\tau\in\cointerval{0}{\tau_1}\mapsto{W_\tau}\in\fspace{X}_K$. The definition of $\la_K$ implies
\begin{align*}
\tfrac{\dint}{\dint\tau}\calF_K\at{W_\tau}&=\bskp{\partial\calF_K\at{W_\tau}}{\tfrac{\dint}{\dint\tau}W_\tau}_K=0\,,
\end{align*}
so $\nehari_K$ is indeed invariant under the flow of \eqref{Eqn:GradientFlow}. Moreover, a direct computation gives
\begin{align*}
\tfrac{\dint}{\dint\tau}\calL_K\at{W_\tau}&=\bskp{\partial\calL_K\at{W_\tau}}{\tfrac{\dint}{\dint\tau}W_\tau}_{I_K}
\\&=%
\frac{\norm{\partial\calL_K\at{W_\tau}}_{2,I_K}^2
\norm{\partial\calF_K\at{W_\tau}}_{2,I_K}^2-{\bskp{\partial\calL\at{W_\tau}}{\partial\calF\at{W_\tau}}_{I_K}}^2}{\norm{\partial\calF_K\at{W_\tau}}_{2,I_K}^2}
\\&\geq0\,,
\end{align*}
where the inequality is strict provided that $\partial\calL_K\at{W_\tau}$ and $\partial\calF_K\at{W_\tau}$ are not co-linear, that means as long as the right hand side in \eqref{Eqn:GradientFlow}$_1$ does not vanish. Finally, suppose that $W_\tau\equiv{W}\in\nehari_K$ is stationary under the flow of \eqref{Eqn:GradientFlow}. Then we have
\begin{align*}
\partial\calL_K\at{W}=\la_K\at{W}\partial\calF_K\at{W}\,,
\end{align*}
and testing this identity with $W$ we find
\begin{align*}
0=\bskp{\partial\calL_K\at{W}}{W}_{I_K}=\la_K\at{W}\bskp{\partial\calF_K\at{W}}{W}_{I_K}\,.
\end{align*}
Lemma \ref{Lem:Nehari.Estimates} now implies $\la_K\at{W}=0$, and hence $\partial\calL_K\at{W}=0$.
\end{proof}
A particular consequence of Lemma \ref{Lem:GradFlowProps} is that each minimizer of $\calL_K|_{M_K}$ is a stationary point of \eqref{Eqn:GradientFlow}, and thus in fact a traveling wave.
%
\section{Ground waves as Nehari minimizers of the action}\label{sec:min}
%
In this section we finish the proof of Theorem \ref{Thm:MainResult} by showing that $\calL_K$ attains its minimum on the Nehari manifold $\nehari_K$. To this end we employ the direct method for $K<\infty$, and pass afterwards to the limit $K\to\infty$. For the proofs we define
\begin{align*}
\ell_K:=\inf \calL_K|_{\nehari_K}
\end{align*}
and recall that Lemma \ref{Lem:Nehari.Estimates} provides a constant $c>0$ such that
$\ell_K\geq{c}$ for all $K\in\ocinterval{0}{\infty}$.
%
%
%
\subsection{Existence of periodic ground waves}\label{sec:min.per}
%
We now fix $0<K<\infty$ and employ the compactness of $\calA$ to show that each minimizing sequence for $\calL_K|_{\nehari_K}$ contains a subsequence that converges to a minimizer. Alternatively, we could employ critical point techniques as follows.  Using similar estimates as in the proof of Lemma \ref{Lem:Nehari.Estimates}, one easily shows that
the action landscape has a \emph{mountain pass geometry} via
\begin{align*}
\calL_K\at{0}=0\,,\qquad \inf\limits_{\norm{W}_{2,I_K}=1}\calL_K\at{W}\geq\tfrac{1}{2}\si^2\,,\qquad
\lim_{\zeta\to\infty} \calL_K\at{\zeta{W}}=-\infty\;\;\;\text{for}\;\;\;\calA{W}\neq0\,,
\end{align*}
and the compactness of $\calA$ ensures that $\calL_K$ satisfies the Palais-Smale condition.
The existence of non-vanishing critical values is hence implied by the Mountain Pass Theorem, and the
Palais-Smale condition guarantees that there is a minimal critical value. The details for this line of argument can, for the special case $q=2$, be found in \cite[Section 3.4.1]{Pan05}.
\begin{theorem}
\label{Thm:ExistencePeriodicWaves}
For each $0<K<\infty$ there exists a minimizer of $\calL_K|_{\nehari_K}$.
\end{theorem}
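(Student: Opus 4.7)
The plan is to apply the direct method of the calculus of variations on the Nehari manifold. Fix a minimizing sequence $\at{W_n}\subset\nehari_K$ with $\calL_K\at{W_n}\to\ell_K$. Lemma \ref{Lem:Nehari.Estimates} bounds $\norm{W_n}_{2,I_K}$ uniformly from above, so after extracting a subsequence I may assume $W_n\rightharpoonup W_\infty$ weakly in $\fspaceL^2\at{I_K}$. Since $K<\infty$, Lemma \ref{Lem:AProps} upgrades this to $\calA W_n\to\calA W_\infty$ strongly in every $\fspaceL^r\at{I_K}$, $1\leq r<\infty$, so that $\calQ_K\at{W_n}\to\calQ_K\at{W_\infty}$ and $\calP_K\at{W_n}\to\calP_K\at{W_\infty}$. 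The uniform lower bound $\calP_K\at{W_n}\geq c>0$ from Lemma \ref{Lem:Nehari.Estimates} passes to the limit, so $\calA W_\infty\neq 0$ and Corollary \ref{Cor:MaxAlongRays} supplies a well-defined $\bar\zeta:=\bar\zeta_K\at{W_\infty}>0$.

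The weak limit will in general not lie on $\nehari_K$, so $\wt W:=\bar\zeta W_\infty\in\nehari_K$ is the natural candidate. Weak lower semicontinuity of $\norm{\cdot}_{2,I_K}^2$ combined with the continuity of $\calQ_K$ and $\calP_K$ along the sequence gives $\calF_K\at{W_\infty}\leq\liminf_n\calF_K\at{W_n}=0$. Because $\calF_K\at{\zeta W_\infty}=\zeta\partial_\zeta\calL_K\at{\zeta W_\infty}$ changes sign exactly at $\zeta=\bar\zeta$ by Lemma \ref{Lem:AuxRes}, this sign information translates into $\bar\zeta\leq 1$.

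The remaining step is to compare the action values via the algebraic identity
\begin{align*}
\calL_K\at{W}-\tfrac{1}{p}\calF_K\at{W}=\at{\tfrac{1}{2}-\tfrac{1}{p}}\si^2\norm{W}_{2,I_K}^2+\at{1-\tfrac{q}{p}}\calQ_K\at{W}\,,
\end{align*}
whose right-hand side has strictly positive coefficients (since $p>2$ and $q<p$) and is weakly lower semicontinuous on $\fspaceL^2\at{I_K}$. Using the homogeneities $\norm{\bar\zeta W_\infty}_{2,I_K}^2=\bar\zeta^2\norm{W_\infty}_{2,I_K}^2$ and $\calQ_K\at{\bar\zeta W_\infty}=\bar\zeta^q\calQ_K\at{W_\infty}$ together with $\bar\zeta\leq 1$, I would obtain
\begin{align*}
\calL_K\at{\wt W}=\calL_K\at{\wt W}-\tfrac{1}{p}\calF_K\at{\wt W}\leq\calL_K\at{W_\infty}-\tfrac{1}{p}\calF_K\at{W_\infty}\leq\liminf_{n\to\infty}\calL_K\at{W_n}=\ell_K\,,
\end{align*}
where $\calF_K\at{\wt W}=\calF_K\at{W_n}=0$ was used. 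Since trivially $\calL_K\at{\wt W}\geq\ell_K$, equality holds and $\wt W$ is a minimizer.

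The main obstacle is that $\calL_K$ itself is not weakly lower semicontinuous, and that the weak limit $W_\infty$ need not satisfy the Nehari constraint. Both issues are handled simultaneously by passing to the modified functional $\calL_K-\tfrac{1}{p}\calF_K$, which coincides with $\calL_K$ on $\nehari_K$ but (because the minus sign in front of $\calP_K$ has vanished) is weakly lower semicontinuous; the accompanying sign $\calF_K\at{W_\infty}\leq 0$ places $\bar\zeta$ in $\ocinterval{0}{1}$, precisely the range in which this modified functional is decreasing under the rescaling $W_\infty\mapsto\wt W=\bar\zeta W_\infty$.
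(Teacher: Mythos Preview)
Your proof is correct and follows the same overall plan as the paper (direct method, uniform bounds from Lemma~\ref{Lem:Nehari.Estimates}, compactness of $\calA$ from Lemma~\ref{Lem:AProps} to pass $\calQ_K$ and $\calP_K$ to the limit), but the two arguments diverge in how they deal with the fact that the weak limit need not lie on $\nehari_K$.

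The paper exploits the ray-maximality characterisation $\calL_K\at{W_n}\geq\calL_K\at{\zeta W_n}$ for every $\zeta>0$, lets $n\to\infty$ with $\zeta$ fixed, and obtains an inequality that forces the norm defect $\lim_n\norm{W_n}_{2,I_K}^2-\norm{W}_{2,I_K}^2$ to vanish. This yields \emph{strong} $\fspaceL^2$-convergence $W_n\to W$, so the limit itself sits on $\nehari_K$ and is the minimizer. Your route instead projects the weak limit back onto $\nehari_K$ via $\wt W=\bar\zeta_K\at{W_\infty}W_\infty$ and compares action values through the auxiliary functional $\calL_K-\tfrac{1}{p}\calF_K$, which coincides with $\calL_K$ on the Nehari manifold, is weakly lower semicontinuous (the troublesome $-\calP_K$ term having been eliminated), and is monotone in $\zeta$ along rays; the sign $\calF_K\at{W_\infty}\leq0$ then pins $\bar\zeta\leq1$ and closes the estimate. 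This is the standard ``fibering'' trick from the Nehari-manifold literature and is arguably more transportable to other problems. The paper's route gives a bit more for free, namely strong convergence of the minimizing sequence and the identification $\wt W=W_\infty$; these can be recovered from your argument a posteriori (equality in your chain forces equality in the weak lower semicontinuity step), but you do not need them for the bare existence statement.
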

\begin{proof}
Let $\at{W_n}_{n\in\Nset}\subset\nehari_K$ be any minimizing sequence for $\calL_K|_{\nehari_K}$. By construction and  Lemma \ref{Lem:Nehari.Estimates}, we then have
\begin{align*}
c\leq\norm{W_n}_{2,I_K}\leq{C}\,,\qquad
c\leq\calL_K\at{W_n}\leq{C}\,,\qquad
c\leq\calP_K\at{W_n}\leq{C}
\end{align*}
for some constants $0<c<C<\infty$ independent of $n$. By passing to a (not relabeled) subsequence we can assume that
$W_n\rightharpoonup{W}$ weakly in $\fspaceL^2\at{I_K}$, and that $\lim_{n\to\infty}\norm{W_n}_{2,I_K}^2$ exists, and this implies
\begin{align*}
\ga:=\lim_{n\to\infty}\norm{W_n}_{2,I_K}-\norm{W}_{2,I_K}\geq0\,. 
\end{align*}
Our strategy is now to show that $W$ minimizes $\calL_K$ on $\nehari_K$. The properties of $\calA$, see Lemma \ref{Lem:AProps}, guarantee that $\calA{W_n}\to\calA{W}$ pointwise and strongly in $\fspaceL^{r}\at{I_K}$ for all $1<r<\infty$. We therefore have $W\in\fspace{X}_K$, and
$\calP_K\at{W}=\lim_{n\to\infty}\calP_K\at{W_n}\geq{c}$ yields
$\calA{W}\neq0$ and hence $W\neq0$.
From the definition of $\nehari_K$ we infer that
\begin{align*}
\ell_K&=\calL_K\at{W_n}\geq\calL_K\at{\zeta W_n}
\\&\geq 
\frac{1}{2}\zeta^2\si^2\at{\lim_{n\to\infty}\norm{W_n}_{2,I_K}^2-\norm{W}_{2,I_K}^2}+
\frac{1}{2}\si^2\norm{\zeta W}^2_{2,I_K}+\calQ_k\at{\zeta{W_n}}-\calP_K\at{\zeta{W_n}}
\end{align*}
for all $\zeta>0$, and the above convergencies imply 
\begin{align*}
\calL_K\at{\zeta W}\leq\frac{1}{2}\zeta^2\si^2\ga+\ell_K\,.
\end{align*}
Choosing $\zeta=\bar\zeta_K\at{w}\in\oointerval{0}{\infty}$
gives
\begin{align*}
\ell_K\leq\calL_K\at{\bar\zeta_K\at{W}{W}}\leq\frac{1}{2}\si^2\bar\zeta_K\at{W}^2\ga+\ell_K\,,
\end{align*}
and we conclude that
\begin{align*}
\ell_K=\max_{\zeta>0}\calL_K\at{\zeta W},\qquad\norm{W}_{2,I_K}=\lim_{n\to\infty}\norm{W_n}_{2,I_K}\,.
\end{align*}
These identities imply the strong convergence $W_n\to{W}$ as well as $W\in\nehari_K$ with 
$\calL_K\at{W}=\ell_K=\lim_{n\to\infty}\calL_K\at{W_N}$. 
\end{proof}
To complete our existence proof for periodic ground waves we finally derive an upper bound for $\ell_K$ which in turn implies
that minimizer of $\calL_K$ are non-trivial for large $K$.
\begin{lemma}
\label{Lem:LimSupEst}
We have $\limsup_{K\to\infty}\ell_K\leq\ell_\infty$.
\end{lemma}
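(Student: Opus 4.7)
The plan is to fix an arbitrary $W\in\nehari_\infty$, construct for each sufficiently large $K$ a test function $\wt V_K\in\nehari_K$ with $\calL_K\at{\wt V_K}\to\calL_\infty\at{W}$, and then infimize over $\nehari_\infty$ (the case $\nehari_\infty=\emptyset$ being vacuous). The construction will combine periodisation with the Nehari scaling of Corollary \ref{Cor:MaxAlongRays}.

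Concretely, I would let $V_K\in\fspace{X}_K=\fspaceL^2\at{I_K}$ be the $2K$-periodic extension of the restriction $W|_{I_K}$. The core technical step will be to establish the three convergencies
\begin{align*}
\norm{V_K}^2_{2,I_K}\to\norm{W}^2_{2,\Rset}\,,\qquad \calQ_K\at{V_K}\to\calQ_\infty\at{W}\,,\qquad \calP_K\at{V_K}\to\calP_\infty\at{W}
\end{align*}
as $K\to\infty$. The first is immediate. For the other two I split $I_K$ into the bulk $B_K\deq\ccinterval{-K+1/2}{K-1/2}$ and the boundary layer $I_K\setminus B_K$ of total measure one. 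On $B_K$ the convolution window $\ccinterval{\phase-1/2}{\phase+1/2}$ lies inside $\at{-K,\,K}$, so $\calA V_K\equiv\calA W$ on $B_K$ and $\int_{B_K}\abs{\calA V_K}^r\dint\phase\to\int_\Rset\abs{\calA W}^r\dint\phase$ for $r\in\{q,p\}$ thanks to $\calA W\in\fspaceL^q\at\Rset\cap\fspaceL^p\at\Rset$. On the boundary layer, a direct computation using $2K$-periodicity shows that $\calA V_K-\calA W$ is a difference of two integrals of $W$ over tail intervals of length at most $1/2$ near $\pm K$; Cauchy-Schwarz then yields a uniform bound $\norm{\calA V_K-\calA W}_{\infty,I_K\setminus B_K}=\Do{1}$ controlled by the $\fspaceL^2$-norms of $W$ on neighbourhoods of $\pm K$ of fixed length. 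These tail norms vanish as $K\to\infty$ since $W\in\fspaceL^2\at\Rset$, and together with $\int_{\{\abs\phase>K-1/2\}}\abs{\calA W}^r\dint\phase\to0$ this will force the boundary-layer contribution to vanish.

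From these convergencies I obtain $\calL_K\at{V_K}\to\calL_\infty\at{W}$ and $\calF_K\at{V_K}\to\calF_\infty\at{W}=0$. Since $W\in\nehari_\infty$ implies $\bar\zeta_\infty\at{W}=1$, the continuity of $\bar\xi$ provided by Lemma \ref{Lem:AuxRes} yields $\bar\zeta_K\at{V_K}\to 1$. Setting $\wt V_K\deq\bar\zeta_K\at{V_K}V_K\in\nehari_K$, a direct expansion of $\calL_K\at{\wt V_K}$ in powers of $\bar\zeta_K\at{V_K}$ then gives $\calL_K\at{\wt V_K}\to\calL_\infty\at{W}$, so that $\ell_K\leq\calL_K\at{\wt V_K}$ yields $\limsup_{K\to\infty}\ell_K\leq\calL_\infty\at{W}$ and, upon infimizing, the claim.

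The hard part will be the boundary-layer estimate: the periodisation creates wrap-around values of $V_K$ near $\pm K$ that have no counterpart in $W$, and only the $\fspaceL^2\at\Rset$-integrability of $W$ (through tail smallness, and not any pointwise decay of $W$) makes their contribution negligible in the limit.
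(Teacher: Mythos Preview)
Your proof is correct and follows the same overall strategy as the paper: periodise a fixed $W\in\nehari_\infty$, rescale onto $\nehari_K$ via $\bar\zeta_K$, and pass to the limit using the continuity of $\bar\xi$ from Lemma~\ref{Lem:AuxRes}.

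The one noteworthy difference is in how the periodisation is set up. You extend $W|_{I_K}$ periodically and then deal with the wrap-around in the convolution $\calA V_K$ on the boundary layer $I_K\setminus B_K$; your Cauchy--Schwarz tail estimate is correct and does the job. The paper instead first \emph{truncates} $W_\infty$ to $\ccinterval{-K+1/2}{K-1/2}$ (defining $V_K\in\fspace{X}_\infty$ with $V_K=0$ outside) and only then periodises to obtain $W_K\in\fspace{X}_K$. Because the truncated function vanishes within distance $1/2$ of $\partial I_K$, there is no wrap-around at all: one gets $\calA V_K=\calA W_K$ on the whole of $I_K$, so $\calL_K\at{W_K}=\calL_\infty\at{V_K}$ and $\bar\zeta_K\at{W_K}=\bar\zeta_\infty\at{V_K}$ exactly, and the convergence reduces to the single soft statement $V_K\to W_\infty$ in $\fspaceL^2\at\Rset$ together with $\calA V_K\to\calA W_\infty$ in $\fspaceL^r\at\Rset$. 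In short, the paper buys a cleaner argument by sacrificing a small tail of $W_\infty$, whereas you keep all of $W|_{I_K}$ and pay with the boundary-layer computation you (correctly) flagged as the hard part.
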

\begin{proof} 
Let $W_\infty\in\nehari_\infty$ be given. For each $1<K<\infty$ we define $V_K\in\fspace{X}_\infty$
\begin{align*}
V_K
\at{\phase}:=\left\{\begin{array}{lcl}
W_\infty\at{\phase}&&\text{for $\abs{\phase}<K-\tfrac12$},
\\0&&\text{else},
\end{array}\right.
\end{align*}
and $W_K\in\fspace{X}_{K}$ as the $2K$-periodic continuation of $V_K|_{I_K}$. This implies
\begin{align*}
\norm{V_K-W_\infty}_{2,\Rset}+\norm{\calA V_K-\calA W_\infty}_{r,\Rset}\quad\xrightarrow{K\to\infty}\quad0,\qquad
\end{align*}
for all $r\in\ccinterval{1}{\infty}$, and by continuity of $\calL_\infty$ and $\bar{\zeta}_\infty$ we get
\begin{align*}
\calL_\infty\at{W_\infty}=\lim_{K\to\infty}\calL_\infty\bat{V_K},\qquad
1=\bar\zeta_\infty\at{W_\infty}=\lim_{K\to\infty}\bar\zeta_\infty\bat{V_K}\,.
\end{align*}
Moreover, the properties of $\calA$ ensure 
\begin{align*}
\at{\calA{V}_K}\at{\phase}=\at{\calA{W}_K}\at{\phase}\quad\text{for all}\quad
\phase\in{I_K}=\ocinterval{-K}{K}\,,
\end{align*}
and thus we find
\begin{align*}
\calL_K\at{W_K}=\calL_\infty\bat{V_K},\qquad
\bar{\zeta}_{K}\bat{W_K}=\bar{\zeta}_\infty\at{V_K}.
\end{align*}
Consequently, we have $\calL_\infty\at{W_\infty}=\lim_{K\to\infty}\calL_K\at{\bar{\zeta}_K\at{W_K}{W}_K}$,
and $\ell_K\leq\calL_K\at{\bar{\zeta}_K\at{W_K}{W}_K}$ yields $\limsup_{K\to\infty}\ell_K\leq\calL_\infty\at{W_\infty}$. The thesis now follows since $W_\infty\in\nehari_\infty$ was arbitrary.
\end{proof}
\begin{corollary}
\label{Cor:NonConstancy}
Let $K<\infty$ be sufficiently large. Then, each minimizer of $\calL_K|_{\nehari_K}$ is non-constant.
\end{corollary}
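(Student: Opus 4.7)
The plan is a direct comparison: I evaluate $\calL_K$ on the constant elements of $\nehari_K$ and show this value grows linearly in $K$, while Lemma~\ref{Lem:LimSupEst} guarantees $\ell_K$ stays bounded as $K\to\infty$. For $K$ large enough the latter is strictly smaller than the former, so no minimizer can be constant.

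First I identify the constant elements of $\nehari_K$. Since $\calA$ preserves constants, a constant $W\equiv w\in\Rset$ yields $\calA W\equiv w$ and
\begin{align*}
\calF_K\at{W}=2K\bat{\si^2w^2+q\abs{w}^q-p\abs{w}^p}=2K\,f_c\bat{\abs{w}},\qquad
\calL_K\at{W}=2K\,\la_c\bat{\abs{w}},
\end{align*}
where $c\deq\ntriple{\si^2/2}{1}{1}$ and $\la_c$, $f_c$ are the functions from Lemma~\ref{Lem:AuxRes}. The Nehari condition $\calF_K\at{W}=0$ with $w\neq0$ is therefore equivalent to $\abs{w}=\bar\xi\at{c}\deq w_*$, a positive number independent of $K$. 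Hence the only constant elements of $\nehari_K$ are $\pm w_*$, and both realize
\begin{equation*}
\calL_K\at{\pm w_*}=2K\,\la_c\at{w_*}\deq K\alpha_*,\qquad\alpha_*>0,
\end{equation*}
where positivity follows from $\la_c\bat{\bar\xi\at{c}}=\max_{\xi>0}\la_c\at\xi>0$ as established in Lemma~\ref{Lem:AuxRes}.

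To conclude, Lemma~\ref{Lem:LimSupEst} yields $\limsup_{K\to\infty}\ell_K\leq\ell_\infty<\infty$, so $\ell_K\leq C$ for some finite $C$ and all sufficiently large $K$. Choosing $K_0$ so that $K\alpha_*>C$ whenever $K\geq K_0$, any minimizer $W_K\in\nehari_K$ of $\calL_K|_{\nehari_K}$ satisfies $\calL_K\at{W_K}=\ell_K\leq C<K\alpha_*=\calL_K\at{\pm w_*}$, whence $W_K\notin\{\pm w_*\}$ and is therefore non-constant. There is no real obstacle here: the crux is simply that $\calA$ fixes constants, so the constant family in $\nehari_K$ has an action that scales with the period length $2K=\abs{I_K}$, while Lemma~\ref{Lem:LimSupEst} keeps $\ell_K$ on an $\DO{1}$ budget as $K\to\infty$.
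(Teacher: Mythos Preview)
Your proof is correct and follows essentially the same route as the paper: identify the constant elements of $\nehari_K$ via Lemma~\ref{Lem:AuxRes}, observe that their action scales linearly with $K$, and invoke Lemma~\ref{Lem:LimSupEst} to conclude that $\ell_K$ eventually falls strictly below this value. You are even slightly more careful than the paper in noting that there are two constant elements $\pm w_*$ rather than one.
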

\begin{proof} 
The only constant function in $\nehari_K$ is $W\at\phase=\bar\zeta$ with 
$\bar\zeta\in\Rset$ being the unique maximizer $\bar\zeta:=\bar\xi\triple{\si^2/2}{1}{1}>0$ of the function
\begin{align*}
\xi\mapsto \frac{L_K\at{\xi}}{2K}=\la_{\triple{\si^2/2}{1}{1}}\at\xi=\tfrac{1}{2}\si^2\xi^2+\xi^q-\xi^p,
\end{align*}  
see Lemma \ref{Lem:AuxRes} and Corollary \ref{Cor:MaxAlongRays}. In particular, for all sufficiently large $K$ we have 
\begin{align*}
\calL_K\bat{\bar\zeta}=K\calL_{1}\at{\bar\zeta}>\ell_\infty\,,
\end{align*}
and hence 
$\calL_K\bat{\bar\zeta}>\ell_K$ due to Lemma \ref{Lem:LimSupEst}.
\end{proof}

Combing Theorem \ref{Thm:ExistencePeriodicWaves} and Lemma \ref{Lem:GradFlowProps} with Remark \ref{Rem:Nehari} and Corollary \ref {Cor:NonConstancy} we now obtain our existence result for periodic ground waves as formulated in Theorem \ref{Thm:MainResult}.
%
%
\subsection{Convergence to solitary ground waves}\label{sec:min.sol}
%
Our final goal is to prove that $\calL_\infty$ attains its minimum on $\nehari_\infty$. Since the operator
$\calA$ is no longer compact, we cannot argue as in the proof of Theorem \ref{Thm:ExistencePeriodicWaves}. Instead, 
we construct minimizers as limit of period ground waves. The same strategy was used in \cite[Section 3.4.2]{Pan05}
and some of our key arguments are similar to those presented there.
\begin{theorem} 
\notag
$\calL_\infty$ attains its minimum on $\nehari_\infty$ and we have
$\ell_\infty=\lim_{K\to\infty}\ell_K$. In particular, each unbounded sequence $\at{K_m}_{m\in\Nset}$ has at least one subsequence
$\at{K_n}_{n\in\Nset}$ with the following property: There exists a corresponding sequence $\at{W_n}_{n\in\Nset}$ of period ground waves $W_n\in\nehari_{K_n}$ that converges to a solitary ground wave $W_\infty\in\nehari_\infty$ in the sense of
\begin{align*}
\norm{W_\infty-V_n}_{2,\Rset}\quad\xrightarrow{n\to\infty}\quad0,
\end{align*}
where $V_n\in\fspace{X}_\infty$ is defined by $V_n\at{\phase}=W_n\at{\phase}$ for $\phase\in{I}_{K_n}$ and $
V_n\at{\phase}=0$ for $\phase\notin{I}_{K_n}$. 
\end{theorem}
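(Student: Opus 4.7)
The strategy, similar to~\cite[Section~3.4.2]{Pan05}, is to extract a translated weak limit from a sequence of periodic ground waves, pass to the limit in the traveling wave equation~\eqref{Eqn:TW2}, and close the gap between $\ell_\infty$ and $\lim\ell_{K_n}$ via lower semicontinuity on growing intervals. Given an unbounded sequence $\at{K_m}_{m\in\Nset}$, first extract a subsequence with $K_n\to\infty$ and, for each $n$, apply Theorem~\ref{Thm:ExistencePeriodicWaves} to obtain a periodic ground wave $W_n\in\nehari_{K_n}$ with $\calL_{K_n}\at{W_n}=\ell_{K_n}$. Lemma~\ref{Lem:LimSupEst} yields $\ell_{K_n}\leq\ell_\infty+1$ for $n$ large, so Lemma~\ref{Lem:Nehari.Estimates} gives uniform bounds on $\norm{W_n}_{2,I_{K_n}}$, $\calQ_{K_n}\at{W_n}$, $\calP_{K_n}\at{W_n}$ and a uniform amplitude threshold $\norm{\calA W_n}_{\infty,I_{K_n}}\geq c$. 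Since $\calA W_n$ is continuous and $2K_n$-periodic this supremum is attained at some $\phase_n^\ast$, and after replacing $W_n$ by the shift $W_n\at{\cdot+\phase_n^\ast}$—still a $\nehari_{K_n}$-minimizer by translation invariance of $\calL_{K_n}$ and $\calF_{K_n}$—we may assume $\abs{\at{\calA W_n}\at0}\geq c$. Let $V_n\in\fspaceL^2\at\Rset$ denote the zero extension of $W_n|_{I_{K_n}}$. Then $\norm{V_n}_{2,\Rset}=\norm{W_n}_{2,I_{K_n}}$ is bounded, and along a (not relabeled) subsequence $V_n\rightharpoonup W_\infty$ weakly in $\fspaceL^2\at\Rset$ with $\calA V_n\to\calA W_\infty$ pointwise by Lemma~\ref{Lem:AProps}.

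On the interior strip $J_n:=\ccinterval{-K_n+1/2}{K_n-1/2}$ the convolution window of $\calA$ lies inside $I_{K_n}$, so $\calA V_n=\calA W_n$ pointwise on $J_n$. Since $0\in J_n$ for all large $n$, the pointwise limit gives $\abs{\at{\calA W_\infty}\at0}\geq c$ and hence $W_\infty\neq0$. To see that $W_\infty$ solves~\eqref{Eqn:TW2}, test~\eqref{Eqn:TW2} for $W_n$ against any $\vartheta\in C_c^\infty\at\Rset$; for $n$ large, $\supp\vartheta$ together with the relevant convolution neighborhoods lies inside $J_n$, so the test integrals can be rewritten entirely in terms of $V_n$ using self-adjointness of $\calA$. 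The uniform bound $\norm{\calA V_n}_{\infty,\Rset}\leq\sup_n\norm{V_n}_{2,\Rset}$ from Lemma~\ref{Lem:AProps} together with the pointwise convergence $\calA V_n\to\calA W_\infty$ and the continuity of $\Psi_q,\Psi_p$ allow dominated convergence to handle the two nonlinear terms, while weak $\fspaceL^2$-convergence handles the kinetic term. Thus $W_\infty$ solves~\eqref{Eqn:TW2} on $\Rset$, and Remark~\ref{Rem:Nehari} places $W_\infty\in\nehari_\infty$.

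It remains to show $\calL_\infty\at{W_\infty}\leq\liminf_n\ell_{K_n}$. Using $W_\infty\in\nehari_\infty$ and $W_n\in\nehari_{K_n}$ to eliminate the $\calP$ contribution gives
\begin{align*}
\calL_\infty\at{W_\infty}&=\si^2\at{\tfrac12-\tfrac1p}\norm{W_\infty}_{2,\Rset}^2+\at{1-\tfrac{q}{p}}\calQ_\infty\at{W_\infty},\\
\calL_{K_n}\at{W_n}&=\si^2\at{\tfrac12-\tfrac1p}\norm{W_n}_{2,I_{K_n}}^2+\at{1-\tfrac{q}{p}}\calQ_{K_n}\at{W_n},
\end{align*}
with positive prefactors. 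Weak $\fspaceL^2$-lower semicontinuity yields $\norm{W_\infty}_{2,\Rset}^2\leq\liminf_n\norm{V_n}_{2,\Rset}^2=\liminf_n\norm{W_n}_{2,I_{K_n}}^2$. For the $\calQ$-term, fix $R>0$; for $n$ large $\ccinterval{-R}{R}\subset J_n$, so $\calA V_n=\calA W_n$ on $\ccinterval{-R}{R}$, and Fatou gives
\begin{align*}
\int_{-R}^{R}\abs{\calA W_\infty}^q\dint\phase\leq\liminf_n\int_{-R}^R\abs{\calA W_n}^q\dint\phase\leq\liminf_n\calQ_{K_n}\at{W_n}.
\end{align*}
Monotone convergence as $R\to\infty$ yields $\calQ_\infty\at{W_\infty}\leq\liminf_n\calQ_{K_n}\at{W_n}$, hence $\calL_\infty\at{W_\infty}\leq\liminf_n\ell_{K_n}$. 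Combining this with $\ell_\infty\leq\calL_\infty\at{W_\infty}$ (since $W_\infty\in\nehari_\infty$) and $\limsup_n\ell_{K_n}\leq\ell_\infty$ from Lemma~\ref{Lem:LimSupEst} forces $\ell_\infty=\lim_n\ell_{K_n}=\calL_\infty\at{W_\infty}$, so $W_\infty$ is a solitary ground wave. Equality along the chain forces both liminf estimates to be limits; in particular $\norm{V_n}_{2,\Rset}\to\norm{W_\infty}_{2,\Rset}$, which combined with weak $\fspaceL^2$-convergence upgrades to strong convergence $V_n\to W_\infty$ in $\fspaceL^2\at\Rset$.

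\textbf{Main obstacle.} The decisive point is the non-triviality of the weak limit: without the preliminary translation by $\phase_n^\ast$ the mass of $V_n$ could escape to infinity and $W_\infty=0$, which would make the argument collapse. Avoiding a full concentration-compactness dichotomy here relies on two ingredients working together: the rigid $\si$-independent amplitude threshold $\norm{\calA W_n}_{\infty,I_{K_n}}\geq c$ from Lemma~\ref{Lem:Nehari.Estimates}, and the pointwise—not merely weak—convergence of $\calA V_n$ from Lemma~\ref{Lem:AProps}, which together guarantee that a single translation is enough to pin down a nontrivial limit profile.
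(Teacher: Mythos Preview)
Your proof is correct and follows essentially the same route as the paper's: translate the periodic ground waves to pin the amplitude peak at the origin, pass to a weak $\fspaceL^2$-limit of the zero extensions, verify the traveling wave equation in the limit, and then use the Nehari identity $\calL_K=\si^2\at{\tfrac12-\tfrac1p}\norm{W}_{2,I_K}^2+\at{1-\tfrac{q}{p}}\calQ_K\at{W}$ together with lower semicontinuity and Lemma~\ref{Lem:LimSupEst} to force equality and upgrade to strong convergence. The only cosmetic differences are that the paper passes to the limit in~\eqref{Eqn:TW2} pointwise (via dominated convergence inside the averaging integral) rather than by testing against $C_c^\infty$ functions, and that you should, strictly speaking, record $\calQ_\infty\at{W_\infty}<\infty$ \emph{before} invoking Remark~\ref{Rem:Nehari}, since for $1<q<2$ membership in $\fspace{X}_\infty$ is not automatic from $W_\infty\in\fspaceL^2\at\Rset$ alone.
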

\begin{proof} 
\underline{Step 1:} According to Theorem \ref{Thm:ExistencePeriodicWaves} and Remark \ref{Rem:Nehari}, for each $m$ there exists a periodic traveling wave $W_m\in{\nehari_{K_m}}$ which minimizes $\calL_{K_m}|_{\nehari_{K_m}}$. Since $\calL_K$ and traveling wave equation \eqref{Eqn:TW2} are invariant under shifts $W\rightsquigarrow{W}\at{\cdot+\phase_0}$, and 
since the function $\calA{W_m}$ is continuous, we can assume that 
\begin{align}
\label{Thm.Existence.Eqn2}
\at{\calA{W_m}}\at{0}=\norm{\calA W_m}_{\infty,I_{K_m}}.
\end{align}
By Lemma \ref{Lem:Nehari.Estimates} and Lemma \ref{Lem:LimSupEst} we also have
\begin{align}
\label{Thm.Existence.Eqn1}
{c}\leq\calL_{K_m}\at{W_{m}}\leq{C},\qquad
{c}\leq\norm{W_m}_{2,I_{K_m}}\leq{C}
\qquad{c}\leq\norm{\calA{W_m}}_{\infty,I_{K_m}}\leq{C}
\end{align}
and
\begin{align*}
\calQ_{K_m}\at{W_m}\leq{C}\,,\qquad \calP_{K_m}\at{W_m}\leq{C}
\end{align*}
for some constants $0<c<C<\infty$ independent of $m$. Moreover,  from \eqref{Eqn:TW2} we infer, using
Lemma \ref{Lem:AProps} and \eqref{Thm.Existence.Eqn1}, that
\begin{align*}
\norm{\at{\calA{W_m}}^\prime}_{\infty,I_{K_m}}\leq{2}
\norm{W_m}_{\infty,I_{K_m}}\leq{C}\norm{\calA W_m}_{\infty,I_{K_m}}\,.
\end{align*}
In view of \eqref{Thm.Existence.Eqn2} we thus obtain
\begin{align}
\label{Thm.Existence.Eqn3}
W_m\at{\phase} 
=%
W_{m}\at{0}+\int_0^{\phase}\at{\calA{W}_m}^\prime\at{s}\dint{s}\geq{d}\qquad\text{for all}\;\; \abs{\phase}\leq{d}\,,
\end{align}
for some constant $d>0$ independent of $m$. 
\par%
\underline{Step 2:} By definition, we have $\norm{V_m}_{2,\Rset}=\norm{W_m}_{2,I_{K_m}}\leq{C}$ and
\begin{align}
\label{Thm.Existence.Eqn5}
\at{\calA{V}_m}\at{\phase}=\at{\calA{W}_m}\at{\phase}\qquad\text{ for all}\;\; \abs{\phase}<K_m-\tfrac12\,.
\end{align}
We now choose a subsequence $\at{K_n}_{n\in\Nset}$ such that
$V_n\rightharpoonup{W_\infty}\in\fspaceL^2\at\Rset$,
and Lemma \ref{Lem:AProps} provides the pointwise convergence
\begin{align}
\label{Thm.Existence.Eqn8}
\at{\calA{W}_\infty}\at{\phase}=\lim_{n\to\infty}\at{\calA{V_n}}\at{\phase}=
\lim_{n\to\infty}\at{\calA{W_n}}\at{\phase}\,.
\end{align}
Moreover, thanks to \eqref{Eqn:TW2} and \eqref{Thm.Existence.Eqn5} we conclude that the sequence $\at{V_n}_{n\in\Nset}$ converges pointwise, and combining this 
with $V_n\rightharpoonup{W_\infty}$ we arrive at the pointwise convergence
\begin{align}
\notag
W_\infty\at{\phase}=\lim_{n\to\infty}V_n\at{\phase}=\lim_{n\to\infty}W_n\at{\phase}
\end{align}
and we conclude that $W_\infty\in\fspaceL^2\at\Rset$ solves the traveling wave equation \eqref{Eqn:TW2}.
\par
\underline{Step 3:}  
For each $0<D<\infty$ and $1\leq{r}<\infty$ we have
\begin{align*}
\int_{-D}^D\abs{\calA W_\infty}^r\dint{\phase}=\lim_{n\to\infty}\int_{-D}^D\abs{\calA W_n}^r\dint{\phase}
\leq\liminf_{n\to\infty}\int_{I_{K_n}}\abs{\calA W_n}^r\dint{\phase},
\end{align*}
due to the pointwise convergence \eqref{Thm.Existence.Eqn8}, the estimate \eqref{Thm.Existence.Eqn1}, 
and the Dominated Convergence Theorem, and the limit $D\to\infty$ gives
\begin{align}
\label{Thm.Existence.Eqn6}
\calQ_\infty\at{W_\infty}\leq\liminf_{n\to\infty}\calQ_{K_n}\at{W_n}\leq{C}\,,\qquad
\calP_\infty\at{W_\infty}\leq\liminf_{n\to\infty}\calP_{K_n}\at{W_n}\leq{C}\,.
\end{align}
Similarly, using Fatou's Lemma and passing afterwards to $D\to\infty$ we prove that
\begin{align}
\label{Thm.Existence.Eqn7}
\norm{W_\infty}_{2,\Rset}^2=\lim_{D\to\infty}\int_{-D}^{D} W_\infty\at{\phase}^2\dint{\phase}\leq
\lim_{D\to\infty}\liminf_{n\to\infty}\int_{-D}^{D} W_n\at{\phase}^2\dint{\phase}\leq\liminf_{n\to\infty}\norm{W_n}_{2,I_{K_n}}^2.
\end{align}
We have now shown that $W_\infty$ is a traveling wave,
belongs to $\fspace{X}_\infty$, and does not vanish thanks to \eqref{Thm.Existence.Eqn3}.
By 
Remark \ref{Rem:KernelA} and Remark \ref{Rem:Nehari} we therefore find
$\calA{W}\neq0$ and hence $W_\infty\in\nehari_\infty$.
\par
\underline{Step 4:}
Combining \eqref{Thm.Existence.Eqn6} and \eqref{Thm.Existence.Eqn7} with $W_n\in{\nehari_{K_n}}$ for all $n\in\Nset\cup\{\infty\}$ we now estimate
\begin{align}
\ell_\infty\leq\calL_\infty\at{W_\infty}&=\si^2\at{\frac{1}{2}-\frac{1}{p}}\norm{W_\infty}^2_{2,\Rset}+
\at{1-\frac{q}{p}}\calQ_\infty\at{W_\infty}
\notag\\&\leq%
\label{Thm.Existence.Eqn4}
\si^2\at{\frac{1}{2}-\frac{1}{p}}\liminf_{n\to\infty}\norm{W_n}^2_{2,{I_{K_n}}}+
\at{1-\frac{q}{p}}\liminf_{n\to\infty}\calQ_n\at{W_n}
\\&=
\notag%
\liminf_{n\to\infty}\calL_{K_n}\at{W_n}=\liminf_{n\to\infty}\ell_{K_n}.
\end{align}
On the other hand, by Lemma \ref{Lem:LimSupEst} we have $\limsup_{n\to\infty}\ell_{K_n}\leq\ell_\infty$, and thus we find
\begin{align}
\notag
\ell_\infty=\calL_\infty\at{W_\infty}=\liminf_{n\to\infty}\ell_{K_n}=\limsup_{n\to\infty}\ell_{K_n}\,.
\end{align}
Consequently, $W_\infty$ is in fact a solitary ground wave and we have an equality sign in 
\eqref{Thm.Existence.Eqn4}. This  implies
\begin{align*}
\norm{W_\infty}_{2,\Rset}=\lim_{n\to\infty}\norm{W_n}_{2,{I_n}}=\lim_{n\to\infty}\norm{V_n}_{2,\Rset}\,,
\end{align*}
which in turn provides the strong convergence $V_n\to W_\infty$.  Finally, $\ell_\infty=\lim_{K\to\infty}\ell_{K}$ holds since we have shown that any unbounded sequence $\at{K_m}_{m\in\Nset}$ has a subsequence $\at{K_n}_{n\in\Nset}$ such that $\ell_\infty=\lim_{n\to\infty}\ell_{K_n}$.
\end{proof}
%
%
%
\section{Numerical solutions}\label{sec:num}
%
In order to illustrate our analytical findings we implemented the following
discretization of the constrained gradient flow for $\calL_K|_{\nehari_K}$:
\begin{enumerate}
\item
Given $0<K<\infty$, we divide the periodicity cell $\ocinterval{-K}{K}$ into $N$ equidistant grid points and
approximate all integrals by Riemann sums. 
\item
We choose a small flow time $\triangle\tau$ and minimize the action on $\nehari_K$ by iterating the following two update steps:
\begin{enumerate}
\item 
We compute an explicit Euler step for \eqref{Eqn:GradientFlow}, this means we update $W$ tangential to $\nehari_K$ via
\begin{align*}
W\mapsto {W}-\triangle\tau\Bat{\partial\calL_K\at{W}-\la_K\at{W}\partial\calF_K\at{W}}\,.
\end{align*}
\item 
We update in radial direction via 
\begin{align*}
W\mapsto \bat{1+\triangle\tau\calF_K\at{W}}W
\end{align*}
to enforce the constraint $\calF_K\at{W}=0$ (notice that $\bar\zeta_K\at{W}\lessgtr 1$ if and only if $\calF_K\at{W}\lessgtr0$). 
\end{enumerate}
\item To initialize the iteration, we discretize reasonable initial data such as
$W\at{\phase}=\exp\at{-\phase^2}$.
\end{enumerate}
\begin{figure}[t!]
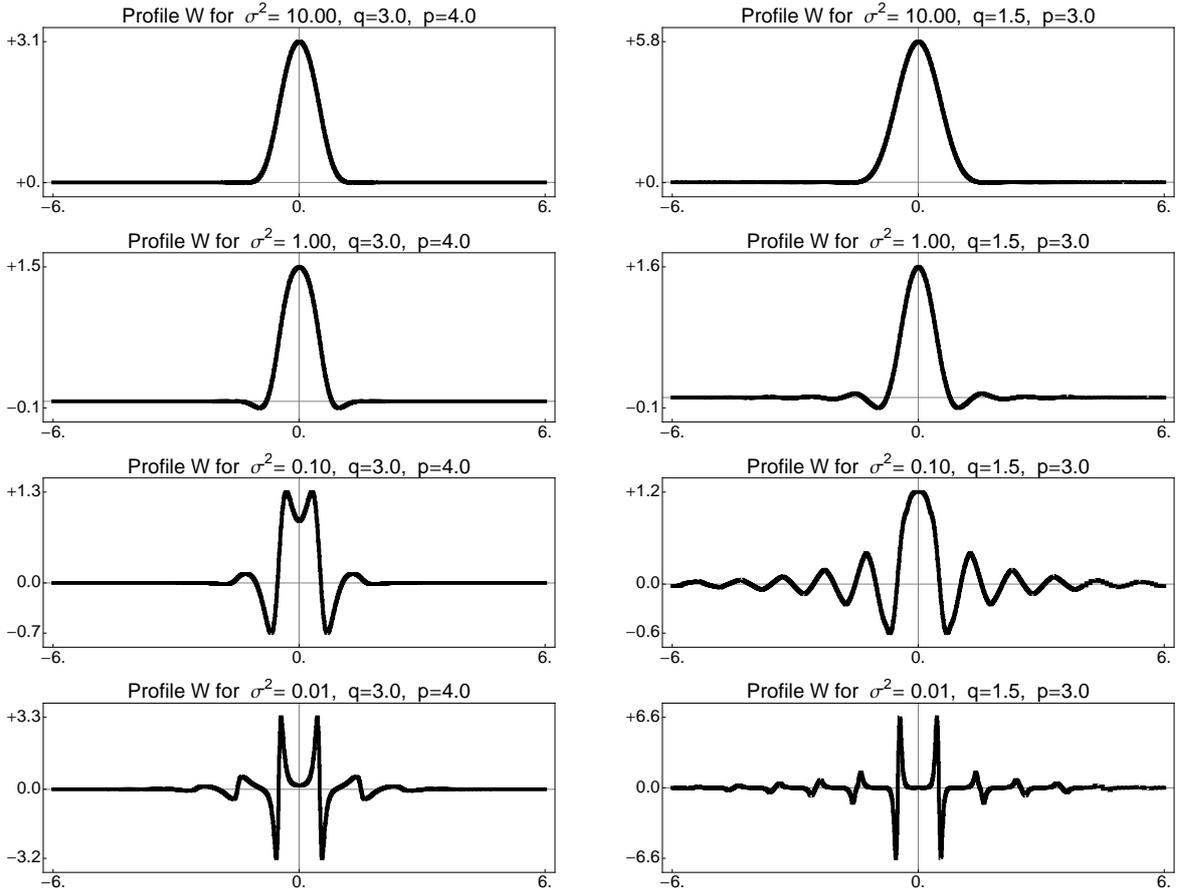
%
\centering{%
\includegraphics[width=0.45\textwidth, draft=\figdraft]{\figfile{prof_24}}%
\hspace{0.05\textwidth}%
\includegraphics[width=0.45\textwidth, draft=\figdraft]{\figfile{prof_14}}%
\\%
\includegraphics[width=0.45\textwidth, draft=\figdraft]{\figfile{prof_23}}%
\hspace{0.05\textwidth}%
\includegraphics[width=0.45\textwidth, draft=\figdraft]{\figfile{prof_13}}%
\\%
\includegraphics[width=0.45\textwidth, draft=\figdraft]{\figfile{prof_22}}%
\hspace{0.05\textwidth}%
\includegraphics[width=0.45\textwidth, draft=\figdraft]{\figfile{prof_12}}%
\\%
\includegraphics[width=0.45\textwidth, draft=\figdraft]{\figfile{prof_21}}%
\hspace{0.05\textwidth}%
\includegraphics[width=0.45\textwidth, draft=\figdraft]{\figfile{prof_11}}%
}%
\caption{%
Numerical approximations of ground wave for different values of $\si^2$ and
$q=3,p=4$ (left column) or $q=3/2,p=3$ (right column). The numerical parameters are $K=6$, $N=2400$,
and $\triangle\tau\geq0.0005$.
}%
\label{Fig}%
\end{figure}%
Although this scheme is rather simple, it exhibits good convergence properties provided that the flow time $\triangle\tau$ is sufficiently small. Moreover, numerical simulations indicate that each time-discrete trajectory converges to a limit that is independent of the particular choice of the initial data,
and thus we expect that this limit approximates a global minimizer of $\calL_K|_{\nehari_K}$. 
\par
Typical numerical solutions for $K=6$ and $\si\in\{0.01,0.1,1.0,10.0\}$ are displayed in Figure \ref{Fig}. We clearly observe that
periodic ground waves are localized but have rather different shapes for large and small speeds, respectively. If $\si^2$ is sufficiently large, the double-well structure of $\Phi$ is less important and the wave looks like a unimodal 
wave for the convex potential $\Phi\at{w}=w^p$, see \cite{Her10a} for details. For small $\si^2$, however, the contributions from the different monomials balance and $W$ has several local extrema.
%
%
%
%
%

\providecommand{\bysame}{\leavevmode\hbox to3em{\hrulefill}\thinspace}
\providecommand{\MR}{\relax\ifhmode\unskip\space\fi MR }
\providecommand{\MRhref}[2]{%
  \href{http://www.ams.org/mathscinet-getitem?mr=#1}{#2}
}
\providecommand{\href}[2]{#2}

\end{document}